\definecolor{orange}{rgb}{1, .36, .08}
\newtheorem{thm}{Theorem}
\newtheorem{problem}{Problem}
\newtheorem{example}{Example}
\newcommand{\ignore}[1]{}
\newcommand{\G}{{\bf G}}
\newcommand{\F}{{\bf F}}
\newcommand{\U}{{\bf U}}
\newcommand{\W}{{\bf W}}
\newcommand{\pfr}[1]{\mathcal{#1}}
\newcommand{\secref}[1]{\mbox{Sec.{ }\ref{sec:#1}}}
\newcommand{\seclabel}[1]{\label{sec:#1}}
\newcommand{\sem}[1]{[\![{#1}]\!]}
\def\C{\mathcal{C}}
\begin{document}


\title{Stochastic Assume-Guarantee Contracts for \\ Cyber-Physical System Design Under  \\ Probabilistic Requirements}

\author{\IEEEauthorblockN{Jiwei Li$^1$,
	Pierluigi Nuzzo$^2$,
	Alberto Sangiovanni-Vincentelli$^3$,
	Yugeng Xi$^1$,
	Dewei Li$^1$} \\
\IEEEauthorblockA{$^1$ Department of Automation, Shanghai Jiao Tong University. Email: adanos@126.com, \{ygxi,dwli\}@sjtu.edu.cn\\
$^2$ Department of Electrical Engineering, University of Southern California, Los Angeles. Email: nuzzo@usc.edu \\
$^3$ EECS Department, University of California, Berkeley. Email: alberto@eecs.berkeley.edu
}
}


\maketitle

\begin{abstract}
%
%
We develop an assume-guarantee contract framework for the design of cyber-physical systems, modeled as closed-loop control systems, under probabilistic requirements.
We use a variant of signal temporal logic, namely, Stochastic Signal Temporal Logic (StSTL) to specify system behaviors as well as contract assumptions and guarantees, thus enabling automatic reasoning about requirements of stochastic systems. Given 
a stochastic linear system representation and a set of requirements captured by bounded StSTL contracts, we propose algorithms that can check contract compatibility, consistency, and refinement, and generate a controller to guarantee that a contract is satisfied, following a stochastic model predictive control approach. Our algorithms leverage encodings of the verification and control synthesis tasks into mixed integer optimization problems, and conservative approximations of probabilistic constraints that produce both sound and tractable problem formulations. We illustrate the effectiveness of our approach on a few examples, including the design of embedded controllers for aircraft power distribution networks.
\end{abstract}

\section{Introduction}

Large and complex Cyber-Physical Systems (CPSs), such as intelligent buildings, transportation, and energy systems, cannot be designed in a monolithic manner. Instead, designers use hierarchical and compositional methods, which allow assembling a large and complex system from smaller and simpler components, such as pre-defined library blocks. Contract-based design is emerging as a unifying formal compositional paradigm for CPS design and has been demonstrated on several applications~\cite{Benveniste2013,Nuzzo15b}. It supports requirement engineering by providing formalisms and mechanisms for early detection of integration errors, for example, by checking compatibility between components locally, before performing expensive, global system verification tasks. 
%
%
%
However, while a number of contract and interface theories have appeared to support deterministic system models~\cite{Benveniste08,Alfaro01_2}, the development of contract frameworks for stochastic systems under probabilistic requirements is still in its infancy. 

Deterministic approaches fall short of accurately capturing those aspects of practical systems that are subject to variability (e.g., due to manufacturing tolerances, usage, and faults), noise, or model uncertainties. While trying to meet the specifications over the entire space of uncertain behaviors,  they tend to produce worst-case designs that are overly conservative.
Moreover, several design requirements in practical applications cannot be rigidly defined, and would be better expressed as probabilistic constraints, e.g., to formally capture that 
``the room temperature in a building shall be in a comfort region with a confidence level larger than 80\% at any time during a day.''
Providing support for reasoning about probabilistic behaviors and for the development of robust design techniques that can avoid over-design is, therefore, crucial. This need becomes increasingly more compelling as a broad number of safety-critical systems, such as autonomous vehicles, uses machine learning and statistical sensor fusion algorithms to infer information from the external world.

An obstacle to the development of stochastic contract frameworks and their adoption in system design stems from the computational complexity of the main verification and synthesis tasks for stochastic systems (see, for example, \cite{KNP07a,KNP11}), which are needed to perform concrete computations with contracts. 
%
%
%
A few proposals toward a specification and contract theory for stochastic systems
have recently appeared, e.g., based on 
Interactive Markov Chains~\cite{gossler12}, Constraint Markov Chains~\cite{Caillaud10}, and Abstract Probabilistic Automata~\cite{delahaye2011abstract,delahaye2011apac}. However, these frameworks mostly use contract representations based on automata, which  are more suitable to reason about discrete-state discrete-time system abstractions. They  tend to favor an imperative specification style, and may show poor scalability when applied to hybrid systems. 

A declarative specification style is often deemed as more practical for system-level requirement specification and validation, since it retains a better correspondence between informal requirements and formal statements.  
In this paper, we develop an A/G contract framework for automated design of CPSs  modeled as closed-loop control systems under probabilistic requirements. We aim to identify formalisms for contract representation and manipulation that effectively trade expressiveness with tractability: (i) they are rich enough to represent \emph{hybrid system behaviors} using a \emph{declarative style}; (ii) they are amenable to algorithms for \emph{efficient computation} of contract operations and relations.

We address these challenges by leveraging an extension of Signal Temporal Logic (STL)~\cite{MalerN04}, namely, Stochastic Signal Temporal Logic (StSTL), to support the specification of probabilistic constraints in the contract assumptions and guarantees. We show that the main verification tasks for bounded StSTL contracts on stochastic linear systems, i.e., compatibility, consistency, and refinement checking, as well as the synthesis of stochastic Model Predictive Control (MPC) strategies can all be translated into mixed integer programs (MIPs) which can be efficiently solved by state-of-the-art tools. Since probabilistic constraints on stochastic systems cannot be expressed in closed analytic form except for a small set of stochastic models~\cite{nemirovski2006convex}, we propose conservative approximations to provide optimization problem formulations that are both sound and tractable. We illustrate the effectiveness of our approach with a few examples, including the synthesis of controllers for an aircraft electric power distribution system.



\emph{\textbf{Related Work.}} A generic assume-guarantee (A/G) contract framework for probabilistic systems that can also capture reliability and availability properties using a declarative style has been recently proposed~\cite{delahaye2010probabilistic}. Our work differs from this effort, since it is not based on a probabilistic notion of contract satisfiability. In our approach, probabilistic constraints appear, instead, as predicates in the contract assumptions and guarantees. 

We express assumptions and guarantees using StSTL, which is an extension of STL~\cite{MalerN04}. STL was proposed for the specification of properties of continuous-time real-valued signals and has been previously used in CPS design~\cite{Nuzzo15b}. 
%
%
A few probabilistic extensions of temporal logics have been proposed over the years to express properties of stochastic systems.
Among these, Probabilistic Computation Tree Logic (PCTL) was introduced
to expresses properties over the realizations (paths) of finite-state Markov
chains and Markov decision processes~\cite{hansson1994logic} by extending
the Computation Tree Logic (CTL)~\cite{clarke1986automatic}. While PCTL can reason about global system executions and uncertainties about the times of occurrence of certain events, 
certain applications are rather concerned with capturing the uncertainty on the value of a signal at a certain time. This is the case, for instance, in the deployment of stochastic MPC schemes in different domains. 
By using StSTL, we can express requirements where uncertainty is restricted to probabilistic predicates and does not involve temporal operators.
While being expressive enough to cover the applications of interest, this restriction is also convenient, since it allows directly translating design and verification problems into optimization and feasibility problems with chance (probabilistic) constraints that can be efficiently solved using off-the-shelf tools. 


Closely related to StSTL, Probabilistic Signal Temporal Logic (PrSTL)~\cite{sadigh2016} has been recently proposed
to specify properties and design controllers for deterministic systems in uncertain
environments, captured by Gaussian stochastic processes.
%
Our work is different since it focuses on developing a comprehensive contract framework that supports both verification and control synthesis tasks. Our framework can reason about a broader class of systems, including linear systems with additive and control-dependent noise and Markovian jump linear systems.
Moreover, it supports non-Gaussian probabilistic constraints 
that cannot be captured in closed analytic form, by formulating encodings of synthesis and verification tasks that can produce sound and efficient approximations. 
%
%


\section{Preliminaries}
\label{sec:background}

As we aim to extend the \emph{Assume-Guarantee (A/G) 	contract} framework~\cite{Benveniste2013} to stochastic systems, we start by providing some background on A/G contracts and Stochastic Signal Temporal Logic (StSTL). 

\subsection{Assume-Guarantee Contracts: An Overview}
\seclabel{agc}

The notion of contracts originates from 
\emph{assume-guarantee reasoning}~\cite{Clark99}, which has been known for a
long time as a hardware and software verification technique. However,  its adoption in the context of  reactive systems, i.e., systems that maintain an ongoing interaction with their environment, such as CPSs, has been advocated only recently~\cite{Benveniste2013,Sangiovanni-Vincentelli2012a}.

We provide an overview of A/G contracts starting with a generic
representation of a component. We associate to it a set of properties that
the component satisfies, expressed with contracts. The contracts will be used to
verify the correctness of the composition and of the refinements.
A component is an element of a design,
characterized by a set of \emph{variables} (input or output), a set of  \emph{ports} (input or output), and a set of \emph{behaviors} over its variables and
ports. Components can be connected together by sharing certain ports under
constraints on the values of certain variables.
Behaviors are generic and could be continuous functions that result from solving
differential equations, or sequences of values or events recognized by an
automaton. To simplify, we use the same term ``variables'' to
denote both component variables and ports.
We use $\sem{M}$ to denote the
set of behaviors of component $M$.

A \emph{contract} $C$ for a component $M$ is a triple $(V, A, G)$, where
$V$ is the set of component variables, and $A$ and $G$ are sets of behaviors
over $V$~\cite{Benveniste08}. $A$ represents the \emph{assumptions} that $M$
makes on its environment, and $G$ represents the \emph{guarantees} provided by
$M$ under the environment assumptions.
A component $M$ satisfies a contract $C$ whenever $M$ and $C$ are
defined over the same set of variables, and all the behaviors of $M$ are
\emph{contained} in the guarantees of $C$ once they are composed (i.e.,
intersected) with the assumptions, that is, when $\sem{M} \cap A \subseteq G$. We
denote this \emph{satisfaction} relation by writing  $M \models C$, and we
say that $M$ is an \emph{implementation} of $C$.
However, a component $E$ can also be associated to a contract $C$ as an
\emph{environment}. We say that $E$ is a legal environment of $C$, and
write $E \models_E C$, whenever $E$ and $C$ have the same variables
and $\sem{E} \subseteq A$.

A contract $C = (V, A, G)$ is in \emph{canonical form} if the \emph{union}
of its guarantees $G$ and the complement of its assumptions $A$ is coincident
with $G$, i.e., $G = G \cup \overline{A}$, where $\overline{A}$ is the
complement of $A$. Any contract $C$ can be turned into a contract in
canonical form $C'$ by taking $A'=A$ and $G' = G \cup \overline{A}$.
We observe that
$C$ and $C'$ possess identical sets of environments and implementations.
Such two contracts $C$ and
$C'$ are then \emph{equivalent}.  Because of this equivalence, in what follows, we assume that all
contracts are in canonical form.

A contract is \emph{consistent} when the set of implementations satisfying it is
not empty, i.e., it is feasible to develop implementations for it. This amounts to verifying that $G \neq \emptyset$, where
$\emptyset$ denotes the empty set. Let $M$ be any implementation; then $C$ is \emph{compatible} if there exists a legal
environment $E$ for $M$, i.e., if and only if $A \neq \emptyset$. The intent is
that a component satisfying contract $C$ can only be used in the context
of a compatible environment.

Contracts can be combined according to
different rules. \emph{Composition} ($\otimes$) of contracts can be used to construct complex global 
contracts out of simpler local ones.
%
Let $C_1$ and $C_2$ be contracts over the same set of variables $V$. 
Reasoning on the compatibility and consistency of the composite contract $C_1 \otimes C_2$ can then be used to assess whether there exist components $M_1$ and $M_2$ such that their composition is valid, even if the full implementation of $M_1$ and $M_2$ is not available.

To reason about consistency between different abstraction layers in a design,
contracts can be ordered by establishing a \emph{refinement} relation. We say that $C$ refines $C'$, written
$C \preceq C'$, if and only if $A \supseteq A'$ and $G \subseteq G'$. Refinement amounts to
relaxing assumptions and reinforcing guarantees.
Clearly, if $M \models C$ and $C \preceq C'$, then $M \models
C'$. On the other hand, if $E \models_E C'$, then $E \models_E
C$. In other words, contract $C$ refines $C'$, if $C$
admits less implementations than $C'$, but more legal environments than
$C'$. We can then replace $C'$ with $C$.

Finally, to combine multiple requirements on the same component that need to be
satisfied simultaneously, the \emph{conjunction} ($\wedge$) of contracts can
also be defined so that, if a component $M$ satisfies the conjunction of $\C_1$
and $\C_2$, i.e., $M \models C_1 \wedge C_2$, then it also satisfies
each of them independently, i.e., $M \models C_1$ and $M \models
C_2$.
We refer the reader to the literature~\cite{Benveniste2013} for the formal definitions and mathematical expressions of contract composition and conjunction. 
In the following, we provide concrete representations of some of these operations and relations using operations on StSTL formulas.

\subsection{Stochastic Signal Temporal Logic (StSTL)}\label{sec:StSTL}

We use StSTL to formalize requirements for discrete-time stochastic system and express both contract assumptions and guarantees. However, similarly to STL, StSTL also extends to continuous-time systems. 

\begin{figure}[t]
	\centering
	\includegraphics[width=0.3\textwidth]{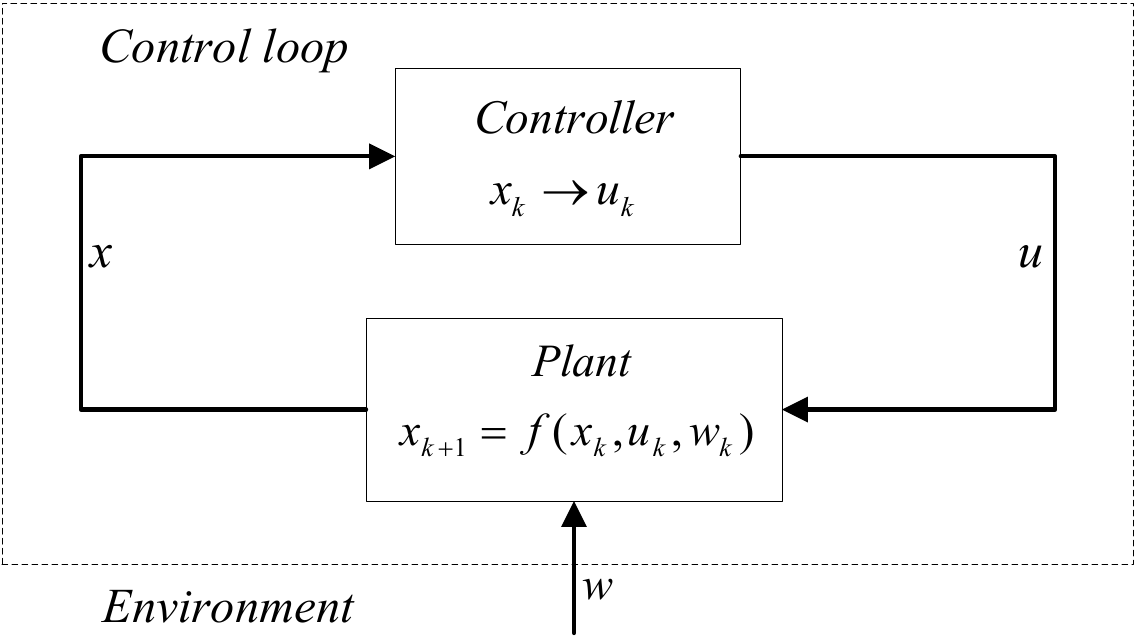}
	\caption{Components in the control loop and their interactions.}
	\label{fig:sys_structure}
\end{figure}

\textbf{Stochastic System.} We consider a discrete-time stochastic system in a classic closed-loop control configuration as shown in Fig.~\ref{fig:sys_structure}. The system dynamics are given by
\begin{equation}\label{eq:sys}
x_0 = \bar{x}_0, \quad x_{k+1} = f(x_k,u_k,w_k), \quad k=0,1,\ldots
\end{equation}
where $f$ is an arbitrary measurable function~\cite{Durrett10}, $x_k\in\mathbb{R}^{n_x}$ is the system state, $\bar{x}_0$ is the initial state, $u_k\in\mathbb{R}^{n_u}$ is
the (control) input, and $\{w_{k}\}_{k=0}^\infty$ is a random process on a complete
probability space, which we denote as $(\Omega, \mathcal{F}, \mathcal{P})$, using the standard notation, respectively, for the sample space, the set of events, and the probability measure on them~\cite{Durrett10}.
Each element $\mathcal{F}_k$ of the filtration  $\mathcal{F}$ denotes the $\sigma$-algebra generated by the sequence $\{w_{t}\}_{t=0}^k$, while we set $\mathcal{F}_{-1} = \{\emptyset,\Omega\}$ as being the trivial $\sigma$-algebra.
%
We assume that the input $u_k$ is
a function of the system states $\{x_{t}\}_{t=0}^k$ and both $x_k$ and $u_k$
are $\mathcal{F}_{k-1}$-measurable random variables~\cite{Durrett10}. We also denote as $z_k = (x_k,u_k,w_{k})$ the vector of all the system variables at time $k$. 
Finally, we abbreviate as $\boldsymbol{z} = z_0, z_1, \ldots$ a system \emph{behavior} and as $\boldsymbol{z}^H = z_0, \ldots, z_{H-1}$ its truncation over the horizon $H$. 

\textbf{StSTL Syntax and Semantics.} StSTL formulas are defined over atomic predicates represented by \emph{chance constraints} of the form
\begin{equation}\label{eq:atomic_prop}
\mu ^{[p]} := \mathcal{P}\{\mu(v) \le 0\} \ge p,
\end{equation}
where $\mu(\cdot)$ is a real-valued measurable function, $v$ is a random
variable on the probability space $(\Omega, \mathcal{F}, \mathcal{P})$, and
$p \in [0,1]$. The truth value of $\mu ^{[p]}$ is interpreted based on the satisfaction of the chance constraint, i.e.,
$\mu ^{[p]}$ is true (denoted with $\top$) if and only if $\mu(v) \le 0$ holds
with probability larger than or equal to $p$. StSTL also supports deterministic predicates as a particular case. If $\mu(v)$ is deterministic, then $\mu^{[p]}$ holds for any value of $p$ if and only if $\mu(v) \le 0$ holds. In this case, we can omit the superscript $[p]$.
We define the syntax of an StSTL formula as follows:
\begin{equation} \label{eq:formula_form}
\psi := \mu ^{[p]} \;|\; \neg \psi \;|\; \psi \vee \phi \;|\; \psi\ \U_{[t_1,t_2]} \phi
\;|\; \G_{[t_1,t_2]} \psi,
\end{equation}
where $\mu ^{[p]}$ is an atomic predicate, $\psi$ and $\phi$ are StSTL formulas, $t_1, t_2 \in \mathbb{R}_+ \cup \{+\infty\}$, and $\U$ and $\G$
are, respectively, the  \emph{until} and \emph{globally} temporal operators. Other operators, such as \emph{conjunction} ($\land$), \emph{weak until} ($\W$), or \emph{eventually} ($\F$) are also supported and can be expressed using the operators in~\eqref{eq:formula_form}.

The semantics of an StSTL formula can be defined recursively as follows:
\begin{equation*}\label{eq:semantics}
{\small
\begin{aligned}
(\boldsymbol{z},k) &\models \mu ^{[p]} &\leftrightarrow\;&
\mathcal{P}\{\mu(z_k) \le 0\} \ge p, \\
(\boldsymbol{z},k) &\models \neg \psi &\leftrightarrow\;&
\neg( (\boldsymbol{z},k) \models \psi) \\
(\boldsymbol{z},k) &\models \psi \vee \phi &\leftrightarrow\;&
(\boldsymbol{z},k) \models \psi \vee (\boldsymbol{z},k) \models \phi, \\
(\boldsymbol{z},k) &\models \psi \U_{[t_1,t_2]} \phi &\leftrightarrow\;&
\exists i \in [k+t_1,k+t_2]: (\boldsymbol{z},i) \models \phi \land \\
& & & (\forall j \in {[k+t_1,i-1]}: (\boldsymbol{z},j) \models \psi),\\
(\boldsymbol{z},k) &\models \G_{[t_1,t_2]} \psi &\leftrightarrow\;&
\forall i \in [k+t_1,k+t_2]: (\boldsymbol{z},i) \models \psi.
\end{aligned}
}
\end{equation*}

As an example, $(\boldsymbol{z},k) \models \G_{[t_1,t_2]} \phi$ means that
$\phi$ holds for all times $t$ between $t_1$ and $t_2$.
%
Intervals may also be open or unbounded, e.g., of the form $[t_1,+\infty)$. 
In this paper, we focus on \emph{bounded} StSTL formulas, that is, formulas that contain no unbounded operators. 
StSTL reduces to STL for deterministic systems, with the exception that the atomic predicate has the form $\mu(v)\le 0$ rather than $\mu(v) > 0$, as in STL.
A difference between StSTL and PrSTL is in the interpretation of the negation of an atomic predicate. In PrSTL the semantics of negation is \emph{probabilistic}, i.e., if $(\boldsymbol{z},t) \models \lambda^{\epsilon_t}_{\alpha_t}$ holds for an atomic PrSTL predicate  $\lambda^{\epsilon_t}_{\alpha_t}$, which is equivalent to stating that $\mathcal{P}\{\lambda_{\alpha_t} (z_t)<0\} > 1 - \epsilon_t$, then $(\boldsymbol{z},t) \models \tilde{\neg} \lambda^{\epsilon_t}_{\alpha_t}$ is interpreted as
$\mathcal{P}\{\lambda_{\alpha_t} (z_t)>0\} > 1 - \epsilon_t$, so that $\tilde{\neg} \lambda^{\epsilon_t}_{\alpha_t}$
and $\lambda^{\epsilon_t}_{\alpha_t}$ can be true at the same time. 
StSTL keeps, instead, the standard semantics of \emph{logic negation}. 

\section{Problem Formulation}


We can concretely express the sets of behaviors $A$
and $G$ in a contract using temporal logic formulas~\cite{Nuzzo15b} and, in particular, StSTL formulas. We then define an StSTL A/G contract as a triple $(V,\phi_A,\phi_G)$, where
$\phi_A$ and $\phi_G$ are StSTL formulas over the set of variables $V$. The canonical form of $(V,\phi_A,\phi_G)$ can be achieved by setting $\phi_G := \phi_A \to \phi_G$.
The main contract operators can then be mapped into entailment of StSTL formulas. 
We define below the verification and synthesis problems addressed in this paper. 


\begin{problem}[Contract Consistency and Compatibility Checking] \label{prob:1}
Given a stochastic system representation $\pfr{S}$ as in~\eqref{eq:sys} and a bounded StSTL contract $C = (V, \phi_A, \phi_G)$ on the system variables $V$, determine whether $C$ is consistent (compatible), that is, whether $\phi_G$ ($\phi_A$) is satisfiable. 
\end{problem}

\begin{problem}[Contract Refinement Checking] \label{prob:2}
Given a stochastic system representation $\pfr{S}$ as in~\eqref{eq:sys} and bounded StSTL contracts $C_1 = (V, \phi_{A1}, \phi_{G1})$  and $C_2 =
(V, \phi_{A2}, \phi_{A2})$ on the system variables $V$, determine whether $C_1 \preceq C_2$, that is, 
$\phi_{A2} \rightarrow \phi_{A1}$ and $\phi_{A1} \rightarrow
\phi_{G2}$ are both valid.
\end{problem}

\begin{problem}[Synthesis from Contract] \label{prob:3}
Given a stochastic system representation $\pfr{S}$ as in~\eqref{eq:sys}, a bounded StSTL contract $C = (V, \phi_A, \phi_G)$  on the system variables $V$, and time horizon $H$, determine a control trajectory $\boldsymbol{u}^H $ such that $(\boldsymbol{z}^H,0) \models \phi_A  \to \phi_G$. 
\end{problem}

\begin{example}\label{sec:motiv_exmp}
We consider the following system description:
\begin{equation} \label{eq:motivdyn}
\begin{split}
x_{k+1} = \begin{bmatrix} 1 & 1\\ 0 & 1 \end{bmatrix} x_{k} + \begin{bmatrix} 1 + 0.3w_{k,1} & -0.2w_{k,2} \\ -0.2w_{k,2} & 1 + 0.3w_{k,1}\end{bmatrix} u_k,
\end{split}
\end{equation}
where $w_k = [w_{k,1},w_{k,2}]^T$  follows a standard Gaussian distribution, i.e., $w_k \sim \mathcal{N}(0,I)$ for all $k$, $I$ being the identity matrix. We assume that the first state variable at time $0$, $[1,0] x_{0}$, is in the interval $[1,2]$ and require that with probability smaller than $0.7$ the first state variable at time $2$ does not exceed $1$. We can formalize this requirement with the following StSTL contract $C_1 = (\phi_{A1}, \phi_{G1})$ in canonical form: 
\begin{equation} \label{eq:motivcon}
\begin{split}
\phi_{A1} & :=  (1 \leq [1,0] x_0) \land ([1,0] x_0 \leq 2), \\
\phi_{G1} & := \phi_{A1} \rightarrow \neg (\mathcal{P}\{[1,0]x_{2} \le 1\} \ge 0.7),
\end{split}
\end{equation}
where, 
for brevity, we drop the set of variables in the contract tuple. 
Assumptions and guarantees are expressed by logical combinations of arithmetic constraints over real numbers and chance constraints, all supported by StSTL. 
We intend to verify the \emph{consistency} of $C_1$.

Given the assumption on the distribution of $w_k$, it is possible to show that there exists a constant matrix $ \Lambda_1^{1/2} \in \mathbb{R}^{3\times 3}$ such that the constraint $\mathcal{P}\left\{[1,0] x_{2} \le 1\right\} \ge 0.7$ translates into a deterministic constraint\footnote{Details on how to compute such a matrix $\Lambda_1^{1/2}$ are provided in Sec.~\ref{sec:encoding}.} 
$f(x_0,u_0,u_1) \leq 0$, where 
%
\begin{align}\label{eq:exmp_chance_cons_equi} 
f(.)  = & [1,2] x_0 + [1,1,1,0]\begin{bmatrix} u_0 \\ u_1\end{bmatrix} -1 + \\
& + F^{-1}(0.7) \left\| \Lambda_1^{1/2}\begin{bmatrix} u_0 \\ u_{1}  \notag \\ 1
\end{bmatrix}\right\|_2,
\end{align}
$F^{-1}$ is the inverse cumulative distribution of a standard normal random variable, and $\left\| . \right\|_2$ is the $\ell_2$ norm. Hence, the contract is consistent if and only if there exists $(x_0,u_0,u_1)$ that satisfies
\begin{equation}\label{eq:exmp_consis2}
([1,0]x_0 < 1) \vee ([1,0]x_0 > 2) \vee f(x_0,u_0,u_1) > 0.
\end{equation}		
To solve this problem, we can translate~\eqref{eq:exmp_consis2} into a mixed integer program by applying encoding techniques proposed in the literature~\cite{raman2014model}. However, since one of the constraints in~\eqref{eq:exmp_consis2} is non-convex, using a nonlinear solver may be inefficient and usually requires the knowledge of bounding boxes for all the decision variables. Moreover, analytical expressions of chance constraints may not be even available in general~\cite{nemirovski2006convex}. 
Similar considerations hold for the problems of checking compatibility,  refinement, and for the generation of MPC schemes. 
\end{example}

Sec.~\ref{sec:encoding} addresses the issue highlighted in Example~\ref{sec:motiv_exmp} by providing techniques for systematically computing mixed integer linear  approximations of chance constraints and bounded StSTL formulas for three common classes of stochastic linear systems. To effectively perform the verification and synthesis tasks in Problem~\ref{prob:1}-\ref{prob:3}, we look for  both under- and over-approximations of StSTL formulas. For example, if the under-approximation of~\eqref{eq:exmp_consis2} is feasible, then we can conclude that  $C_1$ is consistent. However, infeasibility of the under-approximation is not sufficient to conclude about contract inconsistency; for this purpose, we need to prove that the over-approximation of~\eqref{eq:exmp_consis2} is infeasible. 

\ignore{
\secref{encoding}  For instance, sufficient and necessary conditions for the satisfiability of $\mathcal{P}\left((1,0)x_{2} \le 1\right) \ge 0.7$ in~\eqref{eq:exmp_chance_cons_equi} can be, respectively, expressed by the following linear constraints:

{\small
\begin{equation}\label{eq:suffi_neces_exmp}
\begin{split}
& (1,2)x_0 + [1,1,1,0]\begin{bmatrix} u_0 \\ u_1\end{bmatrix} - 1 + F^{-1}(0.7) \sum_{j=1}^5 \left|e_j^T T \begin{bmatrix} u_0 \\ u_1 \\ 1\end{bmatrix}\right| \le 0, \\
& (1,2)x_0 + [1,1,1,0]\begin{bmatrix} u_0 \\ u_1\end{bmatrix} - 1 + \frac{F^{-1}(0.7)}{\sqrt{5}} \sum_{j=1}^5 \left|e_j^T T \begin{bmatrix} u_0 \\ u_1 \\ 1\end{bmatrix}\right| \le 0,
\end{split}
\end{equation}
}
\noindent where $e_j^T$ is the $j$th row of the identity matrix $I$. The constraints in~\eqref{eq:suffi_neces_exmp} can be easily linearized and are, therefore, more tractable than the one in~\eqref{eq:exmp_chance_cons_equi}. 
}

\section{MIP Encoding of Bounded StSTL}\label{sec:encoding}

We present algorithms for the translation of bounded StSTL formulas into mixed integer constraints on the variables of a stochastic system.  
A MIP \emph{under-approximation} of an StSTL formula $\psi$ is a set of mixed integer constraints   $\mathcal{C}^S(\psi)$ whose feasibility is sufficient to ensure the satisfiability of $\psi$.
A MIP \emph{over-approximation} of $\psi$ is a set of mixed integer constraints $\mathcal{C}^N(\psi)$ which must be feasible if $\psi$ is satisfiable. 
When tractable closed-form translations of chance constraints are available, the formula under- and over-approximations coincide and provide an \emph{equivalent} encoding of the satisfiability problem. Otherwise, our framework provides under- and over-approximations in the form of mixed integer linear constraints. 
We start by discussing the translation of atomic predicates.


\subsection{MIP Translation of Chance Constraints}
\label{sec:handlingCons}

Our goal is to translate chance constraints into sets of
deterministic constraints that can be efficiently solved and provide a sound
formulation for our verification and synthesis tasks. 
%
%
Since approximation techniques  depend on the structure of the function
$\mu(\cdot)$ and the distribution of $z_k$ at each time $k$,
we detail solutions for three classes of dynamical systems and chance
constraints that arise in various application domains.
We denote by $S(\mu^{[p]}) \le 0$ the \emph{under-approximation} of the chance constraint, i.e., the set of mixed integer constraints
whose feasibility is sufficient to guarantee the predicate satisfaction. 
Similarly, we denote by $N(\mu^{[p]})\le 0$ the chance constraint \emph{over-approximation}, i.e., the set of constraints whose feasibility is necessary for the predicate satisfiability.

For simplicity, we present approximations of nonlinear constraints consisting of single linear constraints. Piecewise-affine approximations can also be used to arbitrarily improve the approximation accuracy~\cite{bradley1977applied} at higher computation costs. 


\subsubsection{Linear Systems with Additive and Control-Dependent Noise}\label{sec:class1}

We consider the class of stochastic linear systems governed by the following dynamics
\begin{equation} \label{eq:sys1}
\begin{split}
x_{k+1} &= A x_k + B_k u_k + \zeta_k, \\
[B_k,\zeta_k] &= [\bar B_k,\bar\zeta_k] + \sum_{l=1}^N [\tilde B_l,\tilde \zeta_l] w_{k,l},
\end{split}
\end{equation}
where $w_k = [w_{k,1},\ldots,w_{k,N}]^T \in \mathbb{R}^N$ follows the normal distribution $\mathcal{N}(\bar w_k, \Theta_{k})$, and $\bar B_k$ and $\bar\zeta_k$, for each $k$, and $\tilde B_l$ and $\tilde \zeta_l$, for each $l \in \{1,\ldots,N\}$, are constant matrices and vectors, respectively. The resulting matrix $B_k$ and vector $\zeta_k$ are stochastic and model, respectively, a multiplicative and and additive noise term.
This model has been used, for instance, to represent motion dynamics under corrupted control signals~\cite{harris1998signal} or 
networked control systems affected by channel fading~\cite{elia2005remote}. 
Requirements such as policy gains or bounds on the states 
for these systems are often expressed by the following chance constraint:
\begin{equation}\label{eq:type1_chance_cons}
\mathcal{P}\{\mu(z_k) \le 0\} \ge p, \;
\mu(z_k) = a^T x_k + b^T u_k + c.
\end{equation}

The next result provides an exact encoding for~\eqref{eq:type1_chance_cons}. Let $\boldsymbol{u}_{[0,k]}=\left[ u_0^T,\ldots,u_{k}^T \right]^T$ be the vector of the control inputs from $u_0$ to $u_{k}$. 
We denote by $\Theta_{k}^{(l_1 l_2)}$ the $l_1$-th row and $l_2$-th column element of the covariance matrix $\Theta_{k}$, and by $F^{-1}$ the inverse cumulative distribution function of a standard normal random variable.

\begin{thm}
The chance constraint \eqref{eq:type1_chance_cons} on the behaviors of the system in~\eqref{eq:sys1} is equivalent to
\begin{equation}\label{eq:linear_chance_cons_deter}
\lambda_1 (x_0, \boldsymbol{u}_{[0,k]}) + F^{-1}(p) \lambda_2 (x_0, \boldsymbol{u}_{[0,k]}) \le 0,
\end{equation}
where $\lambda_1$ is given by
\begin{equation}\label{eq:Lambda1}
\begin{split}
	\lambda_1(x_0, \boldsymbol{u}_{[0,k]}) & =   a^T A^k x_0 + b^T u_k + c \\
	&+ \sum_{t=1}^{k} a^T A^{k-t} (\bar\zeta_{t-1} + \bar B_{t-1} u_{t-1}) \\
	&+ \sum_{t=1}^{k} \sum_{l=1}^N a^T A^{k-t}
	(\tilde\zeta_{l}  + \tilde B_{l} u_{t-1}) \bar w_{t-1,l},
\end{split}
\end{equation}
and $\lambda_2$ is an $\ell_2$-norm of the system inputs
\begin{equation}\label{eq:Lambda2}
	\lambda_2(x_0, \boldsymbol{u}_{[0,k]}) = \left\|\Lambda_{k-1}^{1/2} \left[ \boldsymbol{u}_{[0,k-1]}^T, 1 \right]^T \right\|_2.
\end{equation}
The scaling matrix $\Lambda_{k-1}^{1/2}$ is deterministic for the given dynamics~\eqref{eq:sys1} and chance constraint~\eqref{eq:type1_chance_cons} and can be computed as a square root matrix of $\Lambda_{k-1}$, obtained as follows:
\begin{equation}\label{eq:Lambda_k_minus_1}
	\begin{split}
		\Lambda_{k-1} & = \begin{bmatrix} \Lambda_{1,1} & \Lambda_{1,2} \\ \Lambda_{1,2}^T & \Lambda_{2,2} \end{bmatrix}, \\
		\Lambda_{1,1} & = \mathrm{diag}(\alpha_{k-1}, \ldots, \alpha_0), \quad \Lambda_{1,2}  = [\beta_{k-1}, \ldots, \beta_0]^T,\\
		\Lambda_{2,2} & = \sum_{t = 1}^k \sum_{l_1 = 1}^N \sum_{l_2 = 1}^N a^T A^{k-t} \tilde{\zeta}_{l_1} a^T A^{k-t} \tilde{\zeta}_{l_2} \Theta_{t-1}^{(l_1 l_2)}, \\
		& \forall t\in \{0,\ldots,k-1\}: \\		 
		\alpha_t & = \sum_{l_1 = 1}^N \sum_{l_2 = 1}^N \tilde{B}_{l_1}^T (A^t)^T a a^T A^t \tilde{B}_{l_2} \Theta_{k-1-t}^{(l_1 l_2)}, \\
		\beta_t &= \sum_{l_1 = 1}^N \sum_{l_2 = 1}^N a^T A^t \tilde{\zeta}_{l_1} a^T A^t \tilde{B}_{l_2} \Theta_{k-1-t}^{(l_1 l_2)}. \\
	\end{split}
\end{equation}
\end{thm}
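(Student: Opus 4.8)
The plan is to show that, for the dynamics in \eqref{eq:sys1}, the scalar quantity $\mu(z_k)=a^T x_k + b^T u_k + c$ is itself a Gaussian random variable whose mean is exactly $\lambda_1$ and whose standard deviation is exactly $\lambda_2$; the claimed equivalence then follows from the textbook reformulation of a scalar Gaussian chance constraint. Throughout I treat the initial state $x_0$ and the control trajectory $\boldsymbol{u}_{[0,k]}$ as deterministic decision quantities (the open-loop setting relevant to the synthesis task), so that the only source of randomness in $\mu(z_k)$ is the noise process $\{w_t\}$.

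First I would unroll the recursion \eqref{eq:sys1} into the closed form
\[
x_k = A^k x_0 + \sum_{t=1}^{k} A^{k-t}\left(B_{t-1} u_{t-1} + \zeta_{t-1}\right),
\]
and substitute it into $\mu(z_k)$. Replacing $B_{t-1}$ and $\zeta_{t-1}$ by their decompositions $\bar B_{t-1}+\sum_l \tilde B_l w_{t-1,l}$ and $\bar\zeta_{t-1}+\sum_l \tilde\zeta_l w_{t-1,l}$ exhibits $\mu(z_k)$ as an affine function of the noise samples $\{w_{t-1,l}\}$, with $x_0$ and $\boldsymbol{u}_{[0,k]}$ appearing only as coefficients. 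Since each $w_{t-1}\sim\mathcal{N}(\bar w_{t-1},\Theta_{t-1})$, the variable $\mu(z_k)$ is Gaussian, and its mean is obtained by substituting each $w_{t-1,l}$ with $\bar w_{t-1,l}$; collecting terms reproduces $\lambda_1$ of \eqref{eq:Lambda1} verbatim.

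The heart of the argument, and the step I expect to be the most tedious, is verifying that $\mathrm{Var}(\mu(z_k))=\lambda_2^2=[\boldsymbol{u}_{[0,k-1]}^T,1]\,\Lambda_{k-1}\,[\boldsymbol{u}_{[0,k-1]}^T,1]^T$. Writing the stochastic part of $\mu(z_k)$ as $\sum_{t=1}^{k}\sum_{l=1}^N a^T A^{k-t}(\tilde B_l u_{t-1}+\tilde\zeta_l)(w_{t-1,l}-\bar w_{t-1,l})$ and invoking the whiteness (independence across time) of the noise, so that no cross-time covariances appear, the variance splits into a sum over $t$ of quadratic forms $\sum_{l_1,l_2}(p_{t,l_1}u_{t-1}+q_{t,l_1})(p_{t,l_2}u_{t-1}+q_{t,l_2})\,\Theta_{t-1}^{(l_1 l_2)}$, with $p_{t,l}=a^T A^{k-t}\tilde B_l$ and $q_{t,l}=a^T A^{k-t}\tilde\zeta_l$. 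Expanding each into its quadratic, bilinear, and $u$-free parts and using the symmetry of $\Theta_{t-1}$, I would match the quadratic-in-$u$ contributions with the diagonal blocks $\alpha_\cdot$, the bilinear contributions (factor of two from symmetry) with $\beta_\cdot$, and the $u$-free contributions with $\Lambda_{2,2}$. The only genuinely delicate point is index bookkeeping: after the substitution $s=k-t$ one checks that $u_{t-1}=u_{k-1-s}$ aligns with the reversed ordering $\mathrm{diag}(\alpha_{k-1},\ldots,\alpha_0)$ and $\Lambda_{1,2}=[\beta_{k-1},\ldots,\beta_0]^T$, which reproduces \eqref{eq:Lambda_k_minus_1} exactly. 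Hence $\lambda_2$ is the standard deviation of $\mu(z_k)$, and any symmetric square root $\Lambda_{k-1}^{1/2}$ gives the same norm in \eqref{eq:Lambda2}.

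Finally, with $\mu(z_k)\sim\mathcal{N}(\lambda_1,\lambda_2^2)$ I would write $\mathcal{P}\{\mu(z_k)\le 0\}=F(-\lambda_1/\lambda_2)$, so that the chance constraint \eqref{eq:type1_chance_cons} is equivalent to $-\lambda_1/\lambda_2\ge F^{-1}(p)$, i.e. to \eqref{eq:linear_chance_cons_deter}. Monotonicity of $F$ makes this an equivalence rather than a one-sided implication, which is precisely what yields an \emph{exact} (not merely conservative) encoding. To be complete I would dispatch the degenerate case $\lambda_2=0$: there $\mu(z_k)$ is deterministic and equal to $\lambda_1$, so the constraint reduces to $\lambda_1\le 0$, consistent with \eqref{eq:linear_chance_cons_deter}.
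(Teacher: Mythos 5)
Your proposal is correct and follows essentially the same route as the paper's proof: unroll the dynamics to exhibit $\mu(z_k)$ as an affine function of the Gaussian noise, identify $\lambda_1$ as its mean and $\lambda_2$ as its standard deviation (the paper organizes the variance computation as one matrix expectation $\Lambda_{k-1}=\mathbb{E}\{[\mathcal{B}_{k-1},\mathcal{Z}_{k-1}]^T[\mathcal{B}_{k-1},\mathcal{Z}_{k-1}]\}$ rather than your term-by-term expansion, but the content is identical), and then apply the standard quantile reformulation of a scalar Gaussian chance constraint. Your explicit treatment of the degenerate case $\lambda_2=0$ and of the whiteness assumption on the noise are small points of added care that the paper's proof leaves implicit.
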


\begin{proof}
	The state $x_k$ of the stochastic system \eqref{eq:sys1} is known to be a linear function of the Gaussian sequence $\{w_t\}_{t=0}^{k-1}$, hence it follows a Gaussian distribution. This also applies to $\mu(z_k)$. In fact, by substituting \eqref{eq:sys1} into the expression for $\mu(z_k)$, we obtain
	\begin{equation}\label{eq:sys_random_linear_cons_linear}
	\begin{split}
	\mu(z_k) = {} &  a^T A^k x_0 + b^T u_k + c \\
	&+ \sum_{t=1}^{k} a^T A^{k-t} (\bar\zeta_{t-1} + \bar B_{t-1} u_{t-1}) \\
	&+ \sum_{t=1}^{k} \sum_{l=1}^N a^T A^{k-t} (\tilde\zeta_{l}  + \tilde B_{l} u_{t-1}) w_{t-1,l}.
	\end{split}
	\end{equation}
Therefore, $\mu(z_k)$ is linear in the random variables $w_{t-1,l}$, $l\in\{1,\ldots,N\}$ and also follows a Gaussian distribution. Next, we derive the mean and the standard deviation of $\mu(z_k)$.
	
	Since the random vector $w_{t-1}$ follows the Gaussian distribution $\mathcal{N}(\bar w_{t-1}, \Theta_{k})$, the expectation of its $l$-th element $w_{t-1,l}$ is $\bar w_{t-1,l}$. Let  $\lambda_1 = \mathbb{E} \{\mu(z_k)\}$ be the expectation of $\mu(z_k)$. Then, we obtain
	\begin{align*}
	\lambda_1 & = a^T A^k x_0 + b^T u_k + c + \sum_{t=1}^{k} a^T A^{k-t} (\bar\zeta_{t-1} + \bar B_{t-1} u_{t-1}) \\
	&+ \sum_{t=1}^{k} \sum_{l=1}^N a^T A^{k-t}
	(\tilde\zeta_{l}  + \tilde B_{l} u_{t-1}) \bar w_{t-1,l},
	\end{align*}
	which is \eqref{eq:Lambda1}.  To derive the standard deviation of $\mu(z_k)$, we first write $\tilde{\mu} = \mu(z_k) - \mathbb{E} \{\mu(z_k)\}$ into a more compact form,
	\begin{equation*}
	\tilde{\mu} = \mathcal{B}_{k-1} \boldsymbol{u}_{[0,k-1]} + \mathcal{Z}_{k-1} = \left[\mathcal{B}_{k-1}, \; \mathcal{Z}_{k-1}\right] \begin{bmatrix} \boldsymbol{u}_{[0,k-1]} \\ 1 \end{bmatrix},
	\end{equation*}
	where $\mathcal{B}_{k-1}$ and $\mathcal{Z}_{k-1}$ are random matrices defined as follows
	\begin{equation*}
	\begin{split}	
	\mathcal{B}_{k-1} & = \sum_{l=1}^N \left[ a^T A^{k-1} \tilde B_{l} \tilde w_{0,l},\; \ldots, \; a^T \tilde B_{l} \tilde w_{k-1,l} \right], \\
	\mathcal{Z}_{k-1} & = \sum_{t=1}^k \sum_{l=1}^N a^T A^{k-t} \tilde\zeta_{l} \tilde w_{t-1,l}, \\
	\tilde w_{t-1,l} & = w_{t-1,l} - \bar w_{t-1,l}.
	\end{split}
	\end{equation*}	
	Then, we obtain
	\begin{equation*}
	\begin{split}
	 \mathbb{E}\{\tilde{\mu}^2\} & = \mathbb{E} \left\{ \left[ \boldsymbol{u}_{[0,k-1]}^T, 1 \right] \begin{bmatrix}\mathcal{B}_{k-1}^T \\ \mathcal{Z}_{k-1}^T\end{bmatrix} \left[\mathcal{B}_{k-1}, \; \mathcal{Z}_{k-1}\right] \begin{bmatrix} \boldsymbol{u}_{[0,k-1]} \\ 1 \end{bmatrix} \right\} \\
	& =  \left[ \boldsymbol{u}_{[0,k-1]}^T, 1 \right] \mathbb{E}\left\{ \begin{bmatrix}\mathcal{B}_{k-1}^T \\ \mathcal{Z}_{k-1}^T\end{bmatrix} \left[\mathcal{B}_{k-1}, \; \mathcal{Z}_{k-1}\right] \right\} \begin{bmatrix} \boldsymbol{u}_{[0,k-1]} \\ 1 \end{bmatrix}
	\end{split}
	\end{equation*}
	and, by renaming the positive semidefinite matrix
	\begin{equation}\label{eq:Lambda}
		\Lambda_{k-1} = \mathbb{E}\left\{ \begin{bmatrix}\mathcal{B}_{k-1}^T \\ \mathcal{Z}_{k-1}^T\end{bmatrix} \left[\mathcal{B}_{k-1}, \; \mathcal{Z}_{k-1}\right] \right\},
	\end{equation}
	we can finally write
	\[
		\mathbb{E}\{\tilde{\mu}^2\} = \left\|\Lambda_{k-1}^{1/2} \left[ \boldsymbol{u}_{[0,k-1]}^T, 1 \right]^T \right\|_2^2 = \lambda^2_2,
	\]
	saying that  $\lambda_2$ in \eqref{eq:Lambda2} corresponds to the standard deviation of $\mu(z_k)$.  
	The full expression for $\Lambda_{k-1}$ in \eqref{eq:Lambda} can be obtained by computing the expectation $\mathbb{E}\{\cdot\}$ and observing that $\mathbb{E}\{\tilde w_{t,l}\} = 0$ and $\mathbb{E}\{\tilde w_{t,l_1} \tilde w_{t,l_2}\} = \Theta_t^{(l_1 l_2)}$, which leads to \eqref{eq:Lambda_k_minus_1}.
	
	Finally, the chance constraint \eqref{eq:type1_chance_cons} on the random variable $\mu(z_k)$ following the distribution $\mathcal{N}(\lambda_1, \lambda_2)$ is equivalent to
	\[
	\lambda_1 + F^{-1}(p) \lambda_2 \le 0,
	\]
	which corresponds to \eqref{eq:linear_chance_cons_deter}, as we wanted to prove.
\end{proof}

In~\eqref{eq:linear_chance_cons_deter}, $\lambda_1$ is a linear function of its variables, and $\lambda_2$ is an $\ell_2$-norm of the system inputs. While~\eqref{eq:linear_chance_cons_deter} is convex when $p \ge 0.5$, this is no longer the case for  $p < 0.5$.
In both cases, we provide an efficient linear approximation by applying a classical norm inequality
to derive lower and upper bound functions $\lambda_2^u$ and $\lambda_2^l$  for $\lambda_2(.)$  as follows:
\begin{equation*}
\begin{split}
\lambda_2^u (x_0, \boldsymbol{u}_{[0,k]}) &=
\sum_{j=1}^{k n_u + 1} \left|e_j^T \Lambda_{k-1}^{1/2} \begin{bmatrix} \boldsymbol{u}_{[0,k-1]} \\ 1 \end{bmatrix}\right|, \\
\lambda_2^l (x_0, \boldsymbol{u}_{[0,k]}) &= \frac{1}{\sqrt{k n_u + 1}}\lambda_2^u
(x_0, \boldsymbol{u}_{[0,k]}),
\end{split}
\end{equation*}
where $e_j^T$ is the $j$-th row of the identity matrix $I$ and $n_u$ is the dimension of $u_k$.
Then, an under-approximation $S(\mu^{[p]}) \le 0$ for~\eqref{eq:linear_chance_cons_deter} is given by
{\small
\begin{equation}\label{eq:linear_chance_cons_deter_suffi}
\begin{cases}
\lambda_1 (x_0, \boldsymbol{u}_{[0,k]}) + F^{-1}(p) \lambda_2^u (x_0, \boldsymbol{u}_{[0,k]}) \le 0, \quad p \ge 0.5 \\
\lambda_1 (x_0, \boldsymbol{u}_{[0,k]}) + F^{-1}(p) \lambda_2^l (x_0, \boldsymbol{u}_{[0,k]}) \le 0, \quad p < 0.5.
\end{cases}
\end{equation}
}
Similarly, an over-approximation $N(\mu^{[p]}) \le 0$ can be obtained as follows: 
{\small
\begin{equation}\label{eq:linear_chance_cons_deter_neces}
\begin{cases}
\lambda_1 (x_0, \boldsymbol{u}_{[0,k]}) + F^{-1}(p) \lambda_2^l (x_0, \boldsymbol{u}_{[0,k]}) \le 0, \quad p \ge 0.5 \\
\lambda_1 (x_0, \boldsymbol{u}_{[0,k]}) + F^{-1}(p) \lambda_2^u (x_0, \boldsymbol{u}_{[0,k]}) \le 0, \quad p < 0.5.
\end{cases}
\end{equation}
}

\begin{table*}[t]
\centering
\caption{Deterministic encodings of the chance constraint $\mathcal{P}\{\mu(z_k)\le 0\}\ge p$}
\label{tab:chance_cons_formu}
\begin{tabular}{c|c|c|c|c|c}
\hline
	System dynamics
	& Constraint function $\mu(z_k)$
	& Distribution of $w_k$
	& Exact 
	& \begin{tabular}{@{}c@{}} Under-approx \\ $S(\mu^{[p]})(z_k)\le 0$ \end{tabular}
	& \begin{tabular}{@{}c@{}} Over-approx \\ $N(\mu^{[p]})(z_k)\le 0$ \end{tabular} \\
\hline
	$\begin{aligned}
		x_{k+1} &= A x_k + B_k u_k + \zeta_k, \\
		[B_k,\zeta_k] &= [\bar B_k,\bar\zeta_k] + \textstyle\sum_{l=1}^H [\tilde B_l,\tilde \zeta_l] w_{k,l}
	\end{aligned}$
	& $a^T x_k + b^T u_k + c$
	& Normal $\mathcal{N}(\bar w_k, \Theta_{k})$ 
	& \eqref{eq:linear_chance_cons_deter}
	& \eqref{eq:linear_chance_cons_deter_suffi} 
	& \eqref{eq:linear_chance_cons_deter_neces} \\
\hline
	$\begin{aligned}
		x_{k+1} &= A_k x_k + B_k u_k + \zeta_k, \\
		[A_k,B_k,\zeta_k] &= [A(w_k),B(w_k), \zeta(w_k)]
	\end{aligned}$
	& $a^T x_k + b^T u_k + c$
	& \begin{tabular}{@{}c@{}}
		Discrete-time finite-state \\
	    Markov chain
	\end{tabular}
	& \eqref{eq:chance_cons_Markov_jump_equiv}
	& \eqref{eq:chance_cons_Markov_jump_equiv} 
	& \eqref{eq:chance_cons_Markov_jump_equiv} \\
\hline
	$\begin{aligned}
		x_{k+1} &= A x_k + B u_k, \\
		\xi_k &= \left[ x_k^T ,\; u_k^T \right]^T
	\end{aligned}$
	& $w_{k}^T \xi_k + c$
	& Normal $\mathcal{N}(\bar w_k, \Theta_{k})$ 
	& \eqref{eq:deter_sys_Gaus_cons_equi} 
	& \eqref{eq:deter_sys_Gaus_cons_suffi}
	& \eqref{eq:deter_sys_Gaus_cons_neces} \\
\hline
\end{tabular}
\end{table*}

\subsubsection{Markovian Jump Linear Systems}
\label{sec:class2}

Markovian jump linear systems are frequently used to model discrete transitions, for instance, due to component failures, abrupt disturbances, or changes in the operating
points of linearized models of nonlinear systems~\cite{de2006mode}. They are characterized by the following dynamics
\begin{equation}\label{eq:sys2}
\begin{split}
x_{k+1} &= A_k x_k + B_k u_k + \zeta_k, \\
[A_k,B_k,\zeta_k] &= [A(w_k),B(w_k),\zeta(w_k)],
\end{split}
\end{equation}
where $A_k, B_k, \zeta_k$ are all functions of $w_k$,  and the sequence
$\{w_k\}_{k=0}^\infty$ is a discrete-time finite-state Markov chain. We assume that, 
for all $k$, $w_k$ takes a value
$w^{l_k}\in\{w^{0}, \ldots, w^{N}\}$. 

We use $\boldsymbol{w}_{[0,k-1]}$ and $\boldsymbol{w}^{[l_0,l_{k-1}]}$ to denote, respectively, the random trajectory $w_0,\ldots,w_{k-1}$ and a particular scenario $w^{l_0},\ldots,w^{l_{k-1}}$. $\mathcal{P}\{\boldsymbol{w}_{[0,k-1]} = \boldsymbol{w}^{[l_0,l_{k-1}]}\}$ is the probability of occurrence of the scenario $\boldsymbol{w}^{[l_0,l_{k-1}]}$. Moreover, for each scenario, we introduce a binary variable $b(\boldsymbol{w}^{[l_0,l_{k-1}]})$ which evaluates to $1$ if and only if $\mu(z_k) \le 0$ holds for the scenario $\boldsymbol{w}^{[l_0,l_{k-1}]}$. Then, an exact encoding for the chance constraint~\eqref{eq:type1_chance_cons} on a Markovian jump linear system is given by the following result.

\begin{thm}
The chance constraint~\eqref{eq:type1_chance_cons} on the behaviors of the system in~\eqref{eq:sys2} is equivalent to the following MIL constraints
\begin{equation}\label{eq:chance_cons_Markov_jump_equiv}
{\small
\begin{cases}
\sum\limits_{t = 0}^{k-1} \sum\limits_{l_{t} = 0}^N b(\boldsymbol{w}^{[l_0,l_{k-1}]})
\mathcal{P}\{\boldsymbol{w}_{[0,k-1]} = \boldsymbol{w}^{[l_0,l_{k-1}]}\} \ge p, \\
\lambda(x_0, \boldsymbol{u}_{[0,k]}, \boldsymbol{w}^{[l_0,l_{k-1}]}) \le 0 \leftrightarrow b(\boldsymbol{w}^{[l_0,l_{k-1}]}) = 1, \\
\end{cases}
}
\end{equation}
where $\lambda(x_0, \boldsymbol{u}_{[0,k]}, \boldsymbol{w}^{[l_0,l_{k-1}]}) \le 0$ enforces that the particular scenario satisfies the chance constraint. $\lambda(\cdot)$
can be computed as follows:
\begin{equation}\label{eq:markov_jump_lambda}
{\small
\begin{split}
\lambda(\cdot) ={} & a^T \mathcal{A}_{k-1} x_0 + \mathcal{B}_{k-1} \boldsymbol{u}_{[0,k]} + \mathcal{Z}_{k-1} + c  \\ 
\mathcal{A}_{k-1} ={} & \left[  A(w^{l_{k-1}}), \cdots, A(w^{l_0})\right], \\
\mathcal{B}_{k-1} ={} & \left[ a^T \mathcal{A}_{k-1} B(w^{l_0}),\; \ldots,\; a^T B(w^{l_{k-1}}), b^T \right] \\
\mathcal{Z}_{k-1} ={} & a^T \mathcal{A}_{k-1} \zeta(w^{l_0}) + \ldots + a^T \zeta(w^{l_{k-1}}), 
\end{split}
}
\end{equation}
with $\boldsymbol{u}_{[0,k]} = [u_{0}^T,\ldots,u_{k}^T]^T$.
\end{thm}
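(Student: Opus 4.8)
The plan is to exploit the finiteness of the driving Markov chain: over the horizon $[0,k-1]$ there are only finitely many realizations $\boldsymbol{w}^{[l_0,l_{k-1}]}$ of $\boldsymbol{w}_{[0,k-1]}$, and conditioned on any one of them the dynamics \eqref{eq:sys2} become purely deterministic. Hence $\mu(z_k)$ is constant on each scenario, the event $\{\mu(z_k)\le 0\}$ decomposes into a disjoint union of scenario events, and the chance constraint $\mathcal{P}\{\mu(z_k)\le 0\}\ge p$ collapses into a sum of scenario probabilities weighted by the indicators of constraint satisfaction. Encoding those indicators with the binary variables $b(\boldsymbol{w}^{[l_0,l_{k-1}]})$ then yields precisely \eqref{eq:chance_cons_Markov_jump_equiv}.

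First I would unroll the recursion \eqref{eq:sys2}. Iterating $x_{t+1}=A(w_t)x_t+B(w_t)u_t+\zeta(w_t)$ from $x_0$ and fixing a scenario $\boldsymbol{w}^{[l_0,l_{k-1}]}$ expresses $x_k$ as an affine function of $x_0$ and the (open-loop) control vector $\boldsymbol{u}_{[0,k-1]}$, whose coefficients are the telescoping matrix products of the realized transition matrices. Substituting this into $\mu(z_k)=a^Tx_k+b^Tu_k+c$ and collecting terms reproduces exactly the quantity $\lambda(x_0,\boldsymbol{u}_{[0,k]},\boldsymbol{w}^{[l_0,l_{k-1}]})$ of \eqref{eq:markov_jump_lambda}, with $\mathcal{A}_{k-1}=[A(w^{l_{k-1}}),\ldots,A(w^{l_0})]$ playing the role of the scenario-dependent state-transition map and $\mathcal{B}_{k-1}$, $\mathcal{Z}_{k-1}$ carrying the control and drift contributions. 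This is essentially bookkeeping, but it is the step requiring care, since the ordering and truncation of the matrix products must match the row/column structure claimed in \eqref{eq:markov_jump_lambda}; here I rely on the control being scenario-independent (open-loop), so that a single decision vector $\boldsymbol{u}_{[0,k]}$ enters $\lambda(\cdot)$ for all scenarios.

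Next I would make the probabilistic reduction precise. Because each $w_t$ takes finitely many values, the atoms of the $\sigma$-algebra $\mathcal{F}_{k-1}$ generated by $w_0,\ldots,w_{k-1}$ are exactly the scenario events $\{\boldsymbol{w}_{[0,k-1]}=\boldsymbol{w}^{[l_0,l_{k-1}]}\}$, and these partition $\Omega$ up to null sets. Since $x_k$ (and hence $\mu(z_k)$) is $\mathcal{F}_{k-1}$-measurable, it is constant on each atom, taking there the value $\lambda(\cdot)$ computed above. Therefore $\{\mu(z_k)\le 0\}$ is the union of those atoms on which $\lambda(\cdot)\le 0$, and finite additivity of $\mathcal{P}$ gives
\[
\mathcal{P}\{\mu(z_k)\le 0\}=\sum_{l_0,\ldots,l_{k-1}} \mathbf{1}\{\lambda(x_0,\boldsymbol{u}_{[0,k]},\boldsymbol{w}^{[l_0,l_{k-1}]})\le 0\}\,\mathcal{P}\{\boldsymbol{w}_{[0,k-1]}=\boldsymbol{w}^{[l_0,l_{k-1}]}\}.
\]
Finally, the biconditional $\lambda(\cdot)\le 0 \leftrightarrow b(\boldsymbol{w}^{[l_0,l_{k-1}]})=1$ (a standard big-$M$ encoding of the indicator) lets me replace $\mathbf{1}\{\lambda(\cdot)\le 0\}$ by $b(\boldsymbol{w}^{[l_0,l_{k-1}]})$, so that $\mathcal{P}\{\mu(z_k)\le 0\}\ge p$ becomes the first inequality of \eqref{eq:chance_cons_Markov_jump_equiv}, and the equivalence follows. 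The main obstacle is not any single hard estimate but rather making the partition/measurability argument airtight together with the correct closed-form unrolling of the scenario dynamics; the exponential blow-up in the number of scenarios is a tractability concern for the encoding, but it does not affect the correctness claimed by this theorem.
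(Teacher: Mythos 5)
Your proposal is correct and follows essentially the same route as the paper's proof: unroll the dynamics per scenario to get the deterministic affine expression $\lambda(\cdot)$, decompose $\mathcal{P}\{\mu(z_k)\le 0\}$ over the finitely many scenario events (the paper phrases this via total probability with $0/1$ conditional probabilities, you via atoms and finite additivity, which is the same argument), and replace the indicators by the binary variables $b(\boldsymbol{w}^{[l_0,l_{k-1}]})$ through the big-$M$ biconditional. Your explicit remark that a single scenario-independent decision vector $\boldsymbol{u}_{[0,k]}$ enters every $\lambda(\cdot)$ is a point the paper leaves implicit, but it does not change the substance of the proof.
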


\begin{proof}
	For a given scenario $\boldsymbol{w}^{[l_0,l_{k-1}]}$ for the Markovian jump linear system in~\eqref{eq:sys2}, the system state $x_k$ is a deterministic function of $\boldsymbol{u}_{[0,k-1]} = [u_{0}^T,\ldots,u_{k-1}^T]^T$. We can then express the constraint $\mu(z) = a^T x_k + b_i^T u_k + c \le 0$ as in~\eqref{eq:markov_jump_lambda}. The probability $\mathcal{P}\{a^T x_k + b^T u_k + c \le 0\}$ can be computed by considering all the possible scenarios for $\boldsymbol{w}_{[0,k-1]}$ as follows:
	\begin{equation}\label{eq:chanc_cons_Markov_jump}
	\begin{split}
	& \mathcal{P}\{a^T x_k + b^T u_k + c \le 0\} \\
	& \;\; = \sum_{t = 0}^{k-1} \sum_{l_{t} = 0}^N
	\mathcal{P}\{a^T x_k + b^T u_k + c \le 0, \boldsymbol{w}^{[l_0,l_{k-1}]}\} \\
	& \;\; = \sum_{t = 0}^{k-1} \sum_{l_{t} = 0}^N
	\mathcal{P}\{a^T x_k + b^T u_k + c \le 0 | \boldsymbol{w}^{[l_0,l_{k-1}]}\}\cdot \\
	& \phantom{\;\; = \sum_{t = 0}^{k-1} \sum_{l_{t} = 1}^H}
	\; \mathcal{P}\{\boldsymbol{w}_{[0,k-1]} = \boldsymbol{w}^{[l_0,l_{k-1}]}\}.
	\end{split}
	\end{equation}
Whether the constraint $a^T x_k + b^T u_k + c \le 0$ is satisfied or not under a given scenario $\boldsymbol{w}^{[l_0,l_{k-1}]}$ is a deterministic event, hence the probability $\mathcal{P}\{a^T x_k + b^T u_k + c \le 0 | \boldsymbol{w}^{[l_0,l_{k-1}]}\}$ is either $1$ or $0$, and corresponds to the value of the binary indicator variable $b(\boldsymbol{w}^{[l_0,l_{k-1}]})$. By introducing  $b(\boldsymbol{w}^{[l_0,l_{k-1}]})$ into~\eqref{eq:chanc_cons_Markov_jump}, the chance constraint
	$\mathcal{P}\{a^T x_k + b^T u_k + c \le 0\} \ge p$ reduces to the first constraint in~\eqref{eq:chance_cons_Markov_jump_equiv}, where the probability
	$\mathcal{P}\{ \boldsymbol{w}_{[0,k-1]} = \boldsymbol{w}^{[l_0,l_{k-1}]}\}$
	is given by the transition probability matrix of the Markov chain. The second constraint in~\eqref{eq:chance_cons_Markov_jump_equiv} directly descends from the definition of $b(\boldsymbol{w}^{[l_0,l_{k-1}]})$.
	Therefore, constraints~\eqref{eq:chance_cons_Markov_jump_equiv} and~\eqref{eq:markov_jump_lambda} 
provide an exact encoding of the chance constraint~\eqref{eq:type1_chance_cons} for a Markovian jump linear system, which is what we wanted to prove. The implication in~\eqref{eq:chance_cons_Markov_jump_equiv} can be translated into MIL constraints using standard techniques~\cite{l2008operations}. 
%
\end{proof}

\subsubsection{Deterministic Systems with Measurement Noise}\label{sec:class3}

We consider a system
\begin{equation*}
x_{k+1} = A x_k + B u_k, \quad
\xi_k = \begin{bmatrix}
x_k \\
u_k
\end{bmatrix},
\end{equation*}
subject to constraints of the form
\begin{equation}\label{eq:chance_cons_deter}
\begin{split}
\mathcal{P}\{\mu(z_k)\le 0\} \ge p, \quad \mu(z_k) = w_{k}^T \xi_k + c,\;
\end{split}
\end{equation}
where $w_k$ follows the normal distribution $\mathcal{N}(\bar w_k, \Theta_{k})$. This setting can be used to represent uncertainties in perception, e.g., in the detection of environment obstacles to the trajectory of autonomous systems~\cite{sadigh2016}.
As for the system in Sec.~\ref{sec:class1}, an exact translation of~\eqref{eq:chance_cons_deter}~\cite{sadigh2016} leads to
\begin{equation}\label{eq:deter_sys_Gaus_cons_equi}
\begin{split}
\bar w_{k}^T \xi_k + c + F^{-1}(p) \left\|\Theta_k^{1/2} \xi_k\right\|_2 \le 0,
\end{split}
\end{equation}
which may result in non-convex constraint. Again, by using 
a norm inequality to bound the $\ell_2$-norm in~\eqref{eq:deter_sys_Gaus_cons_equi}, we provide an under-approximation of~\eqref{eq:chance_cons_deter} in the form
\begin{equation}\label{eq:deter_sys_Gaus_cons_suffi}
{\small
\begin{split}
\begin{cases}
\bar w_{k}^T \xi_k + c + F^{-1}(p) \sum\limits_{j=1}^{n_z} \left|e_j^T \Theta_k^{1/2} \xi_k\right|
\le 0, \;\; p \ge 0.5, \\
\bar w_{k}^T \xi_k + c + \frac{F^{-1}(p)}{\sqrt{n_\xi}} \sum\limits_{j=1}^{n_\xi}
\left|e_j^T \Theta_k^{1/2} \xi_k\right| \le 0, \;\; p < 0.5,
\end{cases} \\
\end{split}
}
\end{equation}
where $e_j$ is the $j$-th column of the identity matrix, and an over-approximation in the form
\begin{equation}\label{eq:deter_sys_Gaus_cons_neces}
{\small
\begin{cases}
\bar w_{k}^T \xi_k + c + \frac{F^{-1}(p)}{\sqrt{n_\xi}} \sum\limits_{j=1}^{n_\xi} \left|e_j^T \Theta_k^{1/2} \xi_k\right| \le 0, \;\; p \ge 0.5, \\
\bar w_{k}^T \xi_k + c + F^{-1}(p) \sum\limits_{j=1}^{n_\xi} \left|e_j^T \Theta_k^{1/2} \xi_k\right| \le 0, \;\; p < 0.5.
\end{cases}
}
\end{equation}

Table~\ref{tab:chance_cons_formu} provides a summary of the encodings in this section. 



\subsection{MIP Under-Approximation}\label{sec:suffi_encode}

We construct a MIP under-approximation $\mathcal{C}_k^S(\psi)$ of a formula $\psi$ by assigning a binary variable $b^{S}_{k}(\psi)$ to the formula such that
$b^{S}_{k} (\psi) = 1 \to (\boldsymbol{z},k) \models\psi$.
We then traverse the parse tree of  $\psi$ and associate binary
variables with all the sub-formulas in $\psi$.
Following the semantics in Sec.~\ref{eq:semantics}, the logical
relation between $\psi$ and its sub-formulas is then recursively captured using mixed integer constraints. The translation terminates when all the atomic
predicates are translated. 

Our encoding is different from the ones previously proposed for deterministic STL formulas~\cite{raman2014model}, in that the truth value of the Boolean variable $b$ associated to each atomic predicate $(\mu \leq 0)$ is not equivalent to the predicate satisfaction. Instead, $b = 1$ is only a sufficient condition for predicate satisfaction, as we are only able to associate $b$ with an under-approximation  $S(\mu^{[p]})(z_{k}) \le 0$. Because $b=0$ cannot encode the logical negation of the predicate, we deal with atomic predicates and their negations separately. Specifically, we  convert any formula into its negation normal form and associate distinct Boolean variables, e.g., $b$ and $\bar{b}$, to each atomic predicate and its negation, respectively. 
We use both $b$ and $\bar{b}$ to translate any Boolean and temporal operator involving the predicate or its negation in the formula. We illustrate this approach on some special cases below.

$\bm{\psi = \mu^{[p]}}$: We requires that $b_{k}^S(\mu^{[p]}) = 1$ implies the feasibility of a sufficient condition for $(\boldsymbol{z},k)\models \mu^{[p]}$ by the following constraint
\begin{equation}\label{eq:atomic_milp}
S(\mu^{[p]})(z_{k}) \le (1 - b_{k}^S(\mu^{[p]}))M,
\end{equation}
where $M$ is a sufficiently large positive constant (``big-$M$'' encoding technique)~\cite{l2008operations}, and $S(\mu^{[p]})(z_{k}) \le 0$ is the chance constraint under-approximation. 


$\bm{\psi = \neg\mu^{[p]}}$: 
If an under-approximation $S(\neg\mu^{[p]})(z_{k}) \le 0$ is available, then we require
\begin{equation}\label{eq:neg_atomic_milp}
\begin{split}
S(\neg\mu^{[p]})(z_{k}) \le (1 - b_{k}^S(\neg\mu^{[p]}))M.
\end{split}
\end{equation}
Otherwise, we recall that $\mathcal{P}(\mu(z_k) \leq 0) < p$
is equivalent to $\mathcal{P}(\mu(z_k) > 0) > 1-p$.
To bring this predicate into a standard form, we require that $\mathcal{P}(-\mu(z_k) + \epsilon\leq 0) \ge 1-p + \epsilon$, where $\epsilon > 0$ is a sufficiently small real constant. We can then use the encoding in~\eqref{eq:atomic_milp} to obtain
\begin{equation}\label{eq:neg_atomic_milp3}
S((-\mu + \epsilon)^{[1-p + \epsilon]})(z_k) \le (1 - b_{k}^S(\neg\mu^{[p]}))M.
\end{equation}

$\bm{\psi = \G_{[t_1,t_2]} \phi}$: To encode the bounded \emph{globally} predicate we add to $\mathcal{C}_k^S(\psi)$ the mixed integer linear constraint
\begin{equation}\label{eq:global_milp}
b_k^S(\G_{[t_1,t_2]} \phi) \leftrightarrow  \wedge_{i=t_1}^{t_2} b_{k+i}^S(\phi),
\end{equation}
requiring that $b_k^S(\G_{[t_1,t_2]} \phi) = 1$ if and only if $b_{k+i}^S(\phi) = 1$
for all $i \in [t_1, t_2]$.
The conjunction of the $b_{k+i}^S(\phi)$ is then translated into
mixed integer linear constraints using standard techniques~\cite{raman2014model}.

$\bm{\psi = \neg \G_{[t_1,t_2]} \phi}$: When \emph{globally} is negated, we augment $\mathcal{C}_k^S(\psi)$ with the mixed integer linear
constraint
\begin{equation}\label{eq:neg_global_milp}
b_k^S(\neg (\G_{[t_1,t_2]} \phi)) \leftrightarrow \vee_{i=t_1}^{t_2} b_{k+i}^S(\neg\phi),
\end{equation}
showing how we push the negation of a formula to its sub-formulas in a recursive fashion until we reach the atomic predicates.

For brevity, we omit the encoding for the other temporal operators, which directly follows from the semantics in Sec.~\ref{eq:semantics} and the approach in~\eqref{eq:global_milp} and~\eqref{eq:neg_global_milp}. 
%
%
%
If~\eqref{eq:atomic_milp} and~\eqref{eq:neg_atomic_milp3} are linear, then $\mathcal{C}_k^S(\psi)$ is a mixed integer linear constraint set.
Based on the above procedure, the following theorem summarizes the property of the MIP under-approximation.

\begin{thm}\label{thm:suffi_encoding}
	$\mathcal{C}_k^S(\psi)$ is a MIP under-approximation of $\psi$, i.e., if $\mathcal{C}_k^S(\psi)$ is feasible and $\boldsymbol{z}^*$ is a solution, then $\psi$ is satisfiable and $(\boldsymbol{z}^*, k)\models\psi$.
\end{thm}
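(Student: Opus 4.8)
The plan is to prove, by structural induction on the parse tree of $\psi$ written in negation normal form, the invariant that for every subformula $\phi$ appearing at the corresponding node, every time index $m$, and every solution $\boldsymbol{z}^*$ satisfying the constraints associated with $\phi$ in $\mathcal{C}_m^S(\phi)$, one has $b_m^S(\phi) = 1 \to (\boldsymbol{z}^*, m) \models \phi$. Since the top-level encoding of $\mathcal{C}_k^S(\psi)$ forces the root variable $b_k^S(\psi) = 1$, feasibility then yields a solution $\boldsymbol{z}^*$ with $(\boldsymbol{z}^*, k) \models \psi$, which simultaneously witnesses the satisfiability of $\psi$ and establishes the claim.

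For the base cases I would start with an atomic predicate $\phi = \mu^{[p]}$. By constraint~\eqref{eq:atomic_milp}, setting $b_m^S(\mu^{[p]}) = 1$ forces $S(\mu^{[p]})(z_m^*) \le 0$; since $S(\mu^{[p]})(z_m^*) \le 0$ is by construction a \emph{sufficient} condition for the chance constraint $\mathcal{P}\{\mu(z_m^*) \le 0\} \ge p$, the semantics give $(\boldsymbol{z}^*, m) \models \mu^{[p]}$. For a negated atomic $\phi = \neg\mu^{[p]}$ handled through~\eqref{eq:neg_atomic_milp3}, $b_m^S(\neg\mu^{[p]}) = 1$ forces $S((-\mu+\epsilon)^{[1-p+\epsilon]})(z_m^*)\le 0$, hence $\mathcal{P}\{-\mu(z_m^*) + \epsilon \le 0\} \ge 1 - p + \epsilon$; I would then verify the chain $\mathcal{P}\{\mu(z_m^*) \le 0\} \le \mathcal{P}\{\mu(z_m^*) < \epsilon\} \le p - \epsilon < p$, which is exactly the semantics of $\neg\mu^{[p]}$.

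For the inductive step I would treat each operator permitted by the negation normal form. For $\phi = \psi_1 \vee \psi_2$ the disjunctive encoding forces $b_m^S(\psi_1) = 1$ or $b_m^S(\psi_2) = 1$; the induction hypothesis on the active disjunct and the semantics of $\vee$ close the case, and conjunction is dual. For $\phi = \G_{[t_1,t_2]}\phi'$, constraint~\eqref{eq:global_milp} forces $b_{m+i}^S(\phi') = 1$ for all $i \in [t_1,t_2]$, so the induction hypothesis gives $(\boldsymbol{z}^*, m+i) \models \phi'$ for every such $i$, matching the universal semantics of $\G$. For the dual $\phi = \neg\G_{[t_1,t_2]}\phi'$, constraint~\eqref{eq:neg_global_milp} forces some $b_{m+i}^S(\neg\phi') = 1$, and the induction hypothesis yields an index $i$ with $(\boldsymbol{z}^*, m+i)\models\neg\phi'$, i.e.\ the existential witness required by $\neg\G$. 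The cases of $\U_{[t_1,t_2]}$ and its normal-form dual are analogous, with the encoding forcing the existence of a release time $i$ whose binary, together with those of the preceding instants, are all set to $1$.

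The main obstacle is conceptual rather than computational and must be handled carefully in the base and dual cases: because the atomic encoding provides only the one-sided implication $b_m^S(\phi)=1 \to (\boldsymbol{z}^*,m)\models\phi$, while $b_m^S(\phi)=0$ carries no information about $\neg\phi$, the induction cannot propagate truth through arbitrary negations. This is precisely why the argument must be carried out over the negation normal form, so that negations reach only the atomic leaves, where dedicated under-approximations and dedicated binaries ($b$ and $\bar b$) are available. The delicate points are verifying that the invariant is closed under the dual operators introduced by the normal form, and confirming that the $\epsilon$-shift in~\eqref{eq:neg_atomic_milp3} correctly produces the strict inequality demanded by the semantics of a negated chance predicate.
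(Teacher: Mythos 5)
Your proposal is correct and follows essentially the same route as the paper's proof: structural induction over the negation-normal-form parse tree, with base cases resting on the sufficiency of the chance-constraint under-approximations $S(\cdot)\le 0$ for atomic predicates and their negations, and an inductive step in which the forced binary variables ($b^S_k(\psi)=1$ at the root) propagate satisfaction through the Boolean and temporal encodings such as~\eqref{eq:global_milp}, \eqref{eq:neg_global_milp}, and the until clause. If anything, your treatment is slightly more explicit than the paper's, which dispatches the negated-atomic case with a ``similarly'' while you verify the $\epsilon$-shift chain $\mathcal{P}\{\mu(z^*_m)\le 0\}\le p-\epsilon<p$ in full; this is a welcome but inessential refinement of the same argument.
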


\begin{proof}
	We first prove the theorem for the atomic predicates $\mu^{[p]}$ and $\neg\mu^{[p]}$. We observe that $\mathcal{C}_k^S(\mu^{[p]})$ is equivalent to the conjunction of the constraints $(b_k^S(\mu^{[p]}) = 1)$ and~\eqref{eq:atomic_milp}. 
	If $\mathcal{C}_k^S(\mu^{[p]})$ is feasible, then $S(\mu^{[p]})(z_k)\le 0$ must hold. 
	Since $S(\mu^{[p]})(z_k)\le 0$ is a sufficient condition for the satisfaction of the predicate, we conclude $(\boldsymbol{z}^*,k)\models \mu ^{[p]}$. Similarly, the feasibility of $\mathcal{C}_k^S(\neg\mu^{[p]})$ implies  $(\boldsymbol{z}^*,k)\models \neg \mu ^{[p]}$ using constraint~\eqref{eq:neg_atomic_milp}. 	
	
	We now consider a formula $\psi$ such that Theorem \ref{thm:suffi_encoding} holds for all its sub-formulas. Without loss of generality, we discuss $\psi = \phi_1 \U_{[t_1,t_2]}\phi_2$; the same proof structure can be applied to other temporal or logical operators. 
	$\mathcal{C}_k^S(\psi)$ contains the following constraints
	\begin{equation*}
	\begin{split}
	& b_k^S(\psi) = 1,\; b_k^S(\psi) = \vee_{i=t_1}^{t_2} (b_{k+i}^S(\phi_2)
	\wedge_{j=t_1}^{i-1}b_{k+j}^S(\phi_1)), \\
	& \mathcal{C}_{k+i}^S(\phi_1)\setminus \{b_{k+i}^S(\phi_1) = 1\},\;
	\mathcal{C}_{k+j}^S(\phi_2)\setminus \{b_{k+j}^S(\phi_2) = 1\},
	\end{split}
	\end{equation*}
	for all $i\in [t_1,t_2]$ and $j\in [t_1,t_2-1]$. We use $\mathcal{C}_{k+i}^S(\phi_1)\setminus \{b_{k+i}^S(\phi_1) = 1\}$ to denote the set of constraints in $\mathcal{C}_{k+i}^S(\phi_1)$ except for the constraint $(b_{k+i}^S(\phi_1) = 1)$. If $\mathcal{C}_k^S(\psi)$ is feasible, then $b_k^S(\psi) = 1$ must hold, hence there exists $i\in [t_1,t_2]$ such that
	$b_{k+i}^S(\phi_2) \wedge_{j=t_1}^{i-1}b_{k+j}^S(\phi_1) = 1$.
	We then obtain that $b_{k+i}^S(\phi_2) = 1$ holds as well as $b_{k+j}^S(\phi_1) = 1$, $\forall \ j \in [t_1,i-1]$. This ensures that $\mathcal{C}_{k+i}^S(\phi_1)$ and $\mathcal{C}_{k+j}^S(\phi_2)$, $\forall \ j \in [t_1,i-1]$, are feasible. 
	Since Theorem \ref{thm:suffi_encoding} holds for $\phi_1$ and $\phi_2$, we also have 
	$(\boldsymbol{z}^*,k+i)\models\phi_2$ and 
	$(\boldsymbol{z}^*,k+j)\models\phi_1$ $\forall \ j \in [t_1,i-1]$, hence 
	$(\boldsymbol{z}^*,k)\models\phi_1 \U_{[t_1,t_2]}\phi_2$, which is what we wanted to prove.
\end{proof}

It is possible that both the  $\mathcal{C}_k^S(\psi)$ and $\mathcal{C}_k^S(\neg\psi)$ 
under-approximations are infeasible, in which case 
we cannot make any conclusion on whether $\psi$ or $\neg\psi$ are satisfiable.
To conclude on the unsatisfiability of a formula, we resort to a MIP over-approximation.

\subsection{MIP Over-Approximation}\label{sec:neces_encoding}

To generate an over-approximation of $\psi$, we associate a binary variable $b^{N}_{k} (\psi)$ to $\psi$ and seek for a set of mixed integer constraints $\mathcal{C}_k^N(\psi)$ so that
$(\boldsymbol{z},k) \models\psi \rightarrow b^{N}_{k} (\psi) = 1$.
Creating an over-approximation only differs in
the interpretation of the atomic propositions, since we now use deterministic
mixed integer constraints that are necessary for the satisfaction of the chance constraints in the formula. As in Sec.~\ref{sec:suffi_encode}, we deal with an atomic predicate and its negation separately, and provide necessary condition for their satisfaction as follows.

$\bm{\psi = \mu^{[p]}}$: We assign a binary variable $b_{k}^N(\mu^{[p]})$ so that, if the over-approximation $N(\mu^{[p]})(z_k) \le
0$ is not satisfied, then $b_{k}^N(\mu^{[p]})$ is false. We, therefore, add the following mixed integer constraint: 
\begin{equation}\label{eq:atomic_milp_neces}
\begin{split}
N(\mu^{[p]})(z_k) \le (1 - b_k^N(\mu^{[p]}))M,
\end{split}
\end{equation}
%
where $M$ is a large enough positive constant~\cite{l2008operations}.

$\bm{\psi = \neg\mu^{[p]}}$: If an over-approximation $N(\neg\mu^{[p]})(z_k) \le 0$ is available, then we add a binary variable $b_{k}^N(\neg\mu^{[p]})$ and the mixed integer constraint
\begin{equation}\label{eq:neg_atomic_milp_neces}
\begin{split}
N(\neg\mu^{[p]})(z_k) \le (1 - b_k^N(\neg\mu^{[p]}))M.
\end{split}
\end{equation}
Otherwise, since $\mathcal{P}(\mu(z_k) \leq 0) < p$
implies $\mathcal{P}(-\mu(z_k) \leq 0) \ge 1- p$ we require
\begin{equation}\label{eq:neg_atomic_milp_neces2}
\begin{split}
N((-\mu)^{[1-p]})(z_k) \le (1 - b_k^N(\neg\mu^{[p]}))M.
\end{split}
\end{equation}



Other logic and temporal operators are encoded as in Sec.~\ref{sec:suffi_encode}. 
By similar arguments, we obtain the result below.
\begin{thm}\label{thm:neces_encoding}
	$\mathcal{C}_k^N(\psi)$ is a MIP over-approximation for the formula $\psi$, i.e., if $\mathcal{C}_k^N(\psi)$ is infeasible,  then  $\psi$ is unsatisfiable.
\end{thm}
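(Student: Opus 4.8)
The plan is to prove the contrapositive: if $\psi$ is satisfiable then $\mathcal{C}_k^N(\psi)$ is feasible. Following the same structural induction used for Theorem~\ref{thm:suffi_encoding}, I would establish a stronger invariant. Fix any behavior $\boldsymbol{z}$ and any time index $m$, and assign to every binary variable $b_m^N(\phi)$ the actual truth value of $(\boldsymbol{z},m)\models\phi$. I then claim that this \emph{truthful} assignment, together with the continuous variables read off from $\boldsymbol{z}$, satisfies every mixed integer constraint collected in $\mathcal{C}_k^N(\psi)$. Since $\mathcal{C}_k^N(\psi)$ contains the top-level constraint $b_k^N(\psi)=1$ (by analogy with the construction in Section~\ref{sec:suffi_encode}) and we assume $(\boldsymbol{z},k)\models\psi$, the truthful assignment sets $b_k^N(\psi)=1$, so feasibility follows once the invariant is proved.

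For the base case I would treat the atomic predicate $\mu^{[p]}$ and its negation $\neg\mu^{[p]}$ separately. When $(\boldsymbol{z},m)\models\mu^{[p]}$ the chance constraint holds, and the crucial fact is that the over-approximation $N(\mu^{[p]})(z_m)\le 0$ is a \emph{necessary} condition for predicate satisfaction; hence $N(\mu^{[p]})(z_m)\le 0$, and the big-$M$ constraint~\eqref{eq:atomic_milp_neces} is met with $b_m^N(\mu^{[p]})=1$. When $(\boldsymbol{z},m)\not\models\mu^{[p]}$, setting $b_m^N(\mu^{[p]})=0$ relaxes~\eqref{eq:atomic_milp_neces} to $N(\mu^{[p]})(z_m)\le M$, which holds for $M$ large enough. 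The negation is handled identically, using~\eqref{eq:neg_atomic_milp_neces} when a direct over-approximation exists, or~\eqref{eq:neg_atomic_milp_neces2} otherwise, after recalling that $\mathcal{P}(\mu(z_m)\le 0)<p$ implies $\mathcal{P}(-\mu(z_m)\le 0)\ge 1-p$, so that $(\boldsymbol{z},m)\models\neg\mu^{[p]}$ forces the necessary condition $N((-\mu)^{[1-p]})(z_m)\le 0$.

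For the inductive step I would propagate consistency through the biconditional constraints linking each operator to its sub-formulas, such as $b_k^N(\G_{[t_1,t_2]}\phi)\leftrightarrow\wedge_{i=t_1}^{t_2}b_{k+i}^N(\phi)$ and the dual $\vee$-form for the negated operator, i.e. the over-approximation analogues of~\eqref{eq:global_milp} and~\eqref{eq:neg_global_milp}. Since the StSTL semantics define $(\boldsymbol{z},k)\models\G_{[t_1,t_2]}\phi$ exactly as $\forall i\in[t_1,t_2]\colon(\boldsymbol{z},k+i)\models\phi$, the truthful values of the children determine the truthful value of the parent through precisely the Boolean connective appearing in the constraint; each biconditional is thus satisfied by construction, and the same reasoning applies to $\U_{[t_1,t_2]}$, $\vee$, and the pushed-down negations. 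As in Theorem~\ref{thm:suffi_encoding}, it suffices to verify one representative temporal operator since the remaining cases are structurally identical.

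The main obstacle is the base case, specifically verifying that each $N(\mu^{[p]})$ is genuinely implied by satisfaction of the chance constraint. For Markovian jump systems this is immediate because the encoding~\eqref{eq:chance_cons_Markov_jump_equiv} is exact. For the Gaussian classes I would need to check the direction of the norm inequality underlying~\eqref{eq:linear_chance_cons_deter_neces} and~\eqref{eq:deter_sys_Gaus_cons_neces}: since the exact constraint multiplies the $\ell_2$-norm $\lambda_2$ by $F^{-1}(p)$, whose sign flips at $p=0.5$, the necessary condition must use the \emph{lower} bound $\lambda_2^l$ when $p\ge 0.5$ and the \emph{upper} bound $\lambda_2^u$ when $p<0.5$. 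Confirming that this case split yields a constraint weaker than the exact one, so that exact satisfaction implies $N(\mu^{[p]})(z_m)\le 0$, is the delicate part and is exactly what makes the over-approximation sound.
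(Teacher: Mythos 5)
Your proposal is correct and takes essentially the same route as the paper's proof: argue the contrapositive by structural induction over the parse tree, with the base case resting on $N(\mu^{[p]})(z_k)\le 0$ being a necessary condition for predicate satisfaction (together with the reduction of $\neg\mu^{[p]}$ to $(-\mu)^{[1-p]}$), and the inductive step resting on the fact that the Boolean constraints of the encoding mirror the StSTL semantics, checked on one representative temporal operator. Your truthful-assignment invariant is a slightly tighter piece of bookkeeping than the paper's assembly of feasible sub-formula solutions---it makes explicit both why constraints attached to \emph{unsatisfied} sub-formulas remain feasible with $b=0$ under the big-$M$ relaxation and why the shared continuous variables stay consistent across sub-formulas---and your verification of the norm-inequality directions in~\eqref{eq:linear_chance_cons_deter_neces} and~\eqref{eq:deter_sys_Gaus_cons_neces} agrees with the paper's constructions.
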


\begin{proof}
	We  need to prove that $(\boldsymbol{z},k)\models\psi$ is sufficient for the feasibility of $\mathcal{C}_k^N(\psi)$. Let first $\psi$ be the atomic proposition $\mu^{[p]}$.
	Since $N(\mu^{[p]})(z_k) \le 0$ is a necessary condition for the satisfaction of $\mu^{[p]}$, we obtain
	$(\boldsymbol{z},k)\models \mu^{[p]} \to N(\mu^{[p]})(z_k) \le 0$.
	Then, if $\mu^{[p]}$ is satisfiable, the conjunction of \eqref{eq:atomic_milp_neces} and $b_k^N(\mu^{[p]}) = 1$ holds, which is equivalent to the feasibility of $\mathcal{C}_k^N(\psi)$. A similar argument can be used for $\neg\mu^{[p]}$.
	
	When $\psi$ is a generic formula, let Theorem \ref{thm:neces_encoding} hold for the sub-formulas of $\psi$. Then, if a sub-formula is satisfiable, its over-approximation is feasible. Without loss of generality, we consider $\psi = \neg(\phi_1 \U_{[t_1,t_2]} \phi_2)$. $(\boldsymbol{z},k)\models\psi$ is equivalent to 
	\[
	\wedge_{i=t_1}^{t_2} ((\boldsymbol{z},k+i)\models\neg\phi_2 \vee_{j=t_1}^{i-1}
	(\boldsymbol{z},k+j)\models\neg\phi_1)
	\]
	being true, meaning that for all $i \in [t_1,t_2]$ either $(\boldsymbol{z},k+i)\models\neg\phi_2$ holds or there exists $j \in [t_1,i-1]$  
	such that $(\boldsymbol{z},k+j)\models\neg\phi_1$. Since both $\neg\phi_1$ and $\neg\phi_2$ are sub-formulas of $\psi$, $(\boldsymbol{z},k+i)\models\neg\phi_2$ and  $(\boldsymbol{z},k+j)\models\neg\phi_1$ imply, respectively, that $\mathcal{C}_{k+j}^N(\neg\phi_1)$ and $\mathcal{C}_{k+i}^N(\neg\phi_2)$ are feasible. 
	We deduce that for all $i \in [t_1,t_2]$ either $b_{k+i}^N(\neg\phi_2)= 1$ or there exists $j \in [t_1,i-1]$ such that $b_{k+j}^N(\neg\phi_1) = 1$. 
	 Since the relation between $b_{k}^N(\psi)$, $b_{k+j}^N(\neg\phi_1)$, and $b_{k+i}^N(\neg\phi_2)$, as encoded in $\mathcal{C}_k^N(\psi)$, is
	\begin{equation}\label{eq:logic_relation2}
	b_k^N(\psi) = \wedge_{i=t_1}^{t_2} (b_{k+i}^N(\neg\phi_2)
	\vee_{j=t_1}^{i-1}b_{k+j}^N(\neg\phi_1)),
	\end{equation}
	we infer that $b_{k}^N(\psi) = 1$ is feasible. The feasibility of $\mathcal{C}_k^N(\psi)$ is then proved since  a feasible solution for $\mathcal{C}_k^N(\psi)$ can be obtained  by solving the conjunction of the constraints $\mathcal{C}_{k+j}^N(\neg\phi_1)  \setminus \{b_{k+j}^N(\neg\phi_1) = 1\}$ for all $j\in[t_1, t_2-1]$, $\mathcal{C}_{k+i}^N(\neg\phi_2) \setminus \{b_{k+i}^N(\neg\phi_2) = 1\}$ for all $j\in[t_1, t_2]$, constraint~\eqref{eq:logic_relation2}, and $b_{k}^N(\psi) = 1$.
\end{proof}


\section{Contract-Based Verification and Synthesis}\label{sec:contract_check}

We formulate verification and synthesis procedures that leverage under- and over-approximations of bounded StSTL contracts to solve Problem~\ref{prob:1}-\ref{prob:3} for the classes of stochastic systems introduced in Sec.~\ref{sec:handlingCons}.  
A first result provides  sound procedures to check contract consistency and compatibility (Problem~\ref{prob:1}).

\begin{thm}\label{thm:compati_consis}
Let  $\pfr{S}$ be a stochastic system belonging to one of the classes introduced in Sec.~\ref{sec:handlingCons} (Table~\ref{tab:chance_cons_formu}); let $C = (\phi_A,\phi_G)$ be an A/G contract where $\phi_A$ and
$\phi_G$ are bounded StSTL formulas over the system variables. If over- and under-approximations are available for both $\phi_A$ and $\neg \phi_A \vee \phi_G$, then the following hold:
	\begin{enumerate}
		\item If $\mathcal{C}_0^S(\phi_A)$ is feasible, then $C$ is compatible.
		\item If $\mathcal{C}_0^N(\phi_A)$ is infeasible, then $C$ is not
		compatible.
		\item If $\mathcal{C}_0^S(\neg \phi_A \vee \phi_G)$ is feasible, then $C$
		is consistent.
		\item If $\mathcal{C}_0^N(\neg \phi_A \vee \phi_G)$ is infeasible, then $C$
		is not consistent.
	\end{enumerate}
\end{thm}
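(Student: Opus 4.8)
The plan is to reduce all four claims to the satisfiability of a single StSTL formula and then invoke the soundness guarantees established in Theorems~\ref{thm:suffi_encoding} and~\ref{thm:neces_encoding}. Recall from Sec.~\ref{sec:agc} that, for a contract $C = (\phi_A, \phi_G)$ in canonical form, compatibility is equivalent to $A \neq \emptyset$ and consistency is equivalent to $G \neq \emptyset$. Translating these set-theoretic conditions into StSTL over the system behaviors, $C$ is compatible if and only if $\phi_A$ is satisfiable, and $C$ is consistent if and only if the canonical guarantee $\neg\phi_A \vee \phi_G$ is satisfiable. Thus the theorem amounts to certifying satisfiability (claims~1 and~3) or unsatisfiability (claims~2 and~4) of these two formulas, using the MIP encodings.

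For claims~1 and~3, I would argue directly from Theorem~\ref{thm:suffi_encoding}. If $\mathcal{C}_0^S(\phi_A)$ is feasible with a solution $\boldsymbol{z}^*$, then by Theorem~\ref{thm:suffi_encoding} we have $(\boldsymbol{z}^*, 0)\models\phi_A$, so $\phi_A$ is satisfiable, $A \neq \emptyset$, and $C$ is compatible. Applying the same reasoning to the formula $\neg\phi_A \vee \phi_G$ shows that feasibility of $\mathcal{C}_0^S(\neg\phi_A \vee \phi_G)$ yields a behavior satisfying the canonical guarantee, hence $G \neq \emptyset$ and $C$ is consistent.

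For claims~2 and~4, I would invoke Theorem~\ref{thm:neces_encoding}. If $\mathcal{C}_0^N(\phi_A)$ is infeasible, then $\phi_A$ is unsatisfiable, so $A = \emptyset$: no legal environment exists and $C$ is not compatible. Likewise, infeasibility of $\mathcal{C}_0^N(\neg\phi_A \vee \phi_G)$ forces $\neg\phi_A \vee \phi_G$ to be unsatisfiable, so $G = \emptyset$ and $C$ is not consistent. The hypothesis that over- and under-approximations are available for both $\phi_A$ and $\neg\phi_A \vee \phi_G$ is exactly what guarantees that the encodings $\mathcal{C}_0^S(\cdot)$ and $\mathcal{C}_0^N(\cdot)$ are well defined on these two formulas, so that Theorems~\ref{thm:suffi_encoding} and~\ref{thm:neces_encoding} apply.

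The argument is essentially bookkeeping; the only point requiring care is the correspondence between the set-theoretic definitions of compatibility and consistency and the satisfiability of the associated StSTL formulas, in particular remembering that the canonical form replaces $\phi_G$ with $\phi_A \to \phi_G = \neg\phi_A \vee \phi_G$, so that consistency is tested on $\neg\phi_A \vee \phi_G$ rather than on $\phi_G$ alone. Once this correspondence is fixed, each of the four claims is an immediate instance of the soundness direction of Theorem~\ref{thm:suffi_encoding} (feasibility of the under-approximation certifies satisfiability) or of Theorem~\ref{thm:neces_encoding} (infeasibility of the over-approximation certifies unsatisfiability).
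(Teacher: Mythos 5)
Your proof is correct and follows exactly the paper's argument: claims 1 and 3 are instances of Theorem~\ref{thm:suffi_encoding} (feasibility of the under-approximation certifies satisfiability), claims 2 and 4 are instances of Theorem~\ref{thm:neces_encoding} (infeasibility of the over-approximation certifies unsatisfiability), with compatibility tied to satisfiability of $\phi_A$ and consistency to satisfiability of the canonical guarantee $\neg\phi_A \vee \phi_G$. Your write-up is merely more explicit than the paper's about the canonical-form bookkeeping, which the paper leaves implicit.
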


%

\begin{proof}
	By Theorem~\ref{thm:suffi_encoding}, if $\mathcal{C}_0^S(\phi_A)$ is feasible, then $\phi_A$ is satisfiable, which indicates that $C$ is
	compatible. On the other hand, by Theorem \ref{thm:neces_encoding}, if $\mathcal{C}_0^N(\phi_A)$ is
	infeasible, then $\phi_A$ is unsatisfiable, hence $C$ is incompatible.
	The results on consistency can be obtained in the same way.
\end{proof}

The following result addresses refinement checking
(Problem~\ref{prob:2}). 

\begin{thm}\label{thm:refine}
Let  $\pfr{S}$ be a stochastic system belonging to one of the classes introduced in Sec.~\ref{sec:handlingCons} (Table~\ref{tab:chance_cons_formu}); let $C_1 = (\phi_{A1},\phi_{G1})$ and $C_2 = (\phi_{A2}, \phi_{G2})$ be A/G contracts whose assumptions and guarantees are 
bounded StSTL formulas over the system variables. If over- and under-approximations are available for $\psi_1 = \neg \phi_{A2} \vee \phi_{A1}$ and
	$\psi_2 = (\phi_{A1} \wedge \neg \phi_{G1}) \vee (\neg \phi_{A2} \vee \phi_{G2})$, then the following hold:
	\begin{enumerate}
		\item If $\mathcal{C}_0^N(\neg\psi_1)$ and $\mathcal{C}_0^N(\neg\psi_2)$ are
		infeasible, then $C_1 \preceq C_2$.
		\item If $\mathcal{C}_0^S(\neg\psi_1)$ or $\mathcal{C}_0^S(\neg\psi_2)$ are
		feasible, then $C_1 \not\preceq C_2$.
	\end{enumerate}
\end{thm}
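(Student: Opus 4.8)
The plan is to reduce the refinement relation $C_1 \preceq C_2$ to satisfiability questions about the two auxiliary formulas $\psi_1$ and $\psi_2$, and then invoke Theorems~\ref{thm:suffi_encoding} and~\ref{thm:neces_encoding} exactly as in the proof of Theorem~\ref{thm:compati_consis}. First I would recall that, for contracts in canonical form, $C_1 \preceq C_2$ holds if and only if $\phi_{A2} \rightarrow \phi_{A1}$ and $\phi_{G1} \rightarrow \phi_{G2}$ are both valid, where the guarantees are already in canonical form $\phi_{Gi} = \phi_{Ai} \rightarrow \phi_{Gi}$. The key observation is that validity of an implication is the dual of unsatisfiability of its negation: $\phi_{A2} \rightarrow \phi_{A1}$ is valid if and only if $\neg\psi_1 = \neg(\neg\phi_{A2}\vee\phi_{A1})$ is unsatisfiable, and similarly the guarantee refinement condition reduces to unsatisfiability of $\neg\psi_2$, once one checks that $\psi_2$ correctly encodes $\phi_{G1}\rightarrow\phi_{G2}$ in canonical form.

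The main bookkeeping step, which I would carry out carefully, is verifying that $\psi_2 = (\phi_{A1} \wedge \neg \phi_{G1}) \vee (\neg \phi_{A2} \vee \phi_{G2})$ is the right formula. Writing the canonical guarantees explicitly as $\phi_{G1}^{\mathrm{can}} = \neg\phi_{A1}\vee\phi_{G1}$ and $\phi_{G2}^{\mathrm{can}} = \neg\phi_{A2}\vee\phi_{G2}$, the guarantee-refinement requirement $\phi_{G1}^{\mathrm{can}} \rightarrow \phi_{G2}^{\mathrm{can}}$ is valid exactly when $\phi_{G1}^{\mathrm{can}} \wedge \neg\phi_{G2}^{\mathrm{can}}$ is unsatisfiable. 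Taking the negation of $\psi_2$ and pushing it to negation normal form, $\neg\psi_2 = (\neg\phi_{A1}\vee\phi_{G1}) \wedge (\phi_{A2}\wedge\neg\phi_{G2})$, which is precisely $\phi_{G1}^{\mathrm{can}} \wedge \neg\phi_{G2}^{\mathrm{can}}$; hence $C_1\preceq C_2$ is equivalent to the unsatisfiability of both $\neg\psi_1$ and $\neg\psi_2$.

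With this equivalence established, the two claims follow immediately from the soundness results for the encodings. For claim~(1), if $\mathcal{C}_0^N(\neg\psi_1)$ and $\mathcal{C}_0^N(\neg\psi_2)$ are both infeasible, then by Theorem~\ref{thm:neces_encoding} both $\neg\psi_1$ and $\neg\psi_2$ are unsatisfiable, so the two implications are valid and $C_1 \preceq C_2$. For claim~(2), if $\mathcal{C}_0^S(\neg\psi_1)$ is feasible, then by Theorem~\ref{thm:suffi_encoding} $\neg\psi_1$ is satisfiable, so $\phi_{A2}\rightarrow\phi_{A1}$ fails to be valid and the assumption-refinement condition is violated; the symmetric argument applies when $\mathcal{C}_0^S(\neg\psi_2)$ is feasible, violating the guarantee condition. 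In either case $C_1\not\preceq C_2$.

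I expect the main obstacle to be purely the propositional-logic manipulation in the second paragraph: getting the canonical forms, the negations, and the negation-normal-form rewriting exactly right so that $\neg\psi_1$ and $\neg\psi_2$ genuinely capture the negated refinement conditions. Once that algebra is nailed down, the logical core is a one-line application of the two encoding theorems, mirroring the structure of the proof of Theorem~\ref{thm:compati_consis}. A minor point worth stating explicitly is that over- and under-approximations are assumed available for $\psi_1$ and $\psi_2$ (hence for their negations, via the negation-normal-form construction of Sec.~\ref{sec:suffi_encode}--\ref{sec:neces_encoding}), which is exactly what makes $\mathcal{C}_0^N(\neg\psi_i)$ and $\mathcal{C}_0^S(\neg\psi_i)$ well defined.
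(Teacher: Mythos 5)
Your proposal is correct and follows essentially the same route as the paper's proof: both reduce $C_1 \preceq C_2$ to validity of $\psi_1$ and $\psi_2$ (equivalently, unsatisfiability of $\neg\psi_1$ and $\neg\psi_2$, where $\psi_2$ encodes the implication between the canonical-form guarantees $\neg\phi_{A1}\vee\phi_{G1}$ and $\neg\phi_{A2}\vee\phi_{G2}$), and then invoke Theorem~\ref{thm:neces_encoding} for claim~(1) and Theorem~\ref{thm:suffi_encoding} for claim~(2). The only difference is presentational: you spell out the negation-normal-form algebra showing $\neg\psi_2 = (\neg\phi_{A1}\vee\phi_{G1})\wedge\phi_{A2}\wedge\neg\phi_{G2}$, which the paper leaves implicit.
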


\begin{proof}
The proof proceeds as in Theorem~\ref{thm:compati_consis}, by directly applying the definition of contract refinement. By Theorem~\ref{thm:neces_encoding}, if $\mathcal{C}_0^N(\neg\psi_1)$ and $\mathcal{C}_0^N(\neg\psi_2)$ are infeasible, then $\neg\psi_1$ and $\neg\psi_2$ are unsatisfiable, hence $\psi_1$ and $\psi_2$ are valid. We therefore obtain than $\phi_{A2} \rightarrow \phi_{A1}$ and $(\neg \phi_{A1} \vee \phi_{G1}) \rightarrow (\neg \phi_{A2} \vee \phi_{G2})$ are valid, hence $C_1 \preceq C_2$ by definition. 
Similarly, $\mathcal{C}_0^S(\neg\psi_1)$ or $\mathcal{C}_0^S(\neg\psi_2)$ being
		feasible implies that either $\psi_1$ or $\psi_2$ are not valid formulas by Theorem~\ref{thm:suffi_encoding}. We therefore conclude that $C_1 \not\preceq C_2$ holds. 
\end{proof}

The above decision procedures are not complete.
For instance, it is possible that $\mathcal{C}_0^S(\phi_A)$ is infeasible and $\mathcal{C}_0^N(\phi_A)$
is feasible, in which case we are not able to conclude on the satisfiability of $\phi_A$.
In this case, we increasingly refine piecewise-affine under- and over-approximations of chance constraints until we obtain an answer. 


Finally, as an application of Theorem~\ref{thm:compati_consis},  we provide a framework for the design of stochastic MPC schemes using StSTL contracts. We show how a stochastic optimization problem can be generated by enforcing contract consistency on the system in Fig.~\ref{fig:sys_structure} to obtain a control trajectory which solves Problem~\ref{prob:3}. 

\begin{example}[Generation of Stochastic MPC Schemes] \label{ex:mpc}
In stochastic MPC, the controller measures the plant state $x_k$ at time $k$ and derives a control input $u_k$ by solving a stochastic optimization problem. The plant state $x_{k+1}$ is a function of $u_k$  and the random external signal $w_k$ according to the system dynamics. 
Given a stochastic system described as in~\eqref{eq:sys}, where the environment input (disturbance) $w_k$ at each time $k$ follows a distribution $\mathcal{D}$, let the bounded StSTL contract  $C = (Q x_0 \leq  r,\phi)$ capture the system requirement that $\phi$ be satisfied if the initial state $x_0$ is in the polyhedron represented by set of linear inequalities $Q x_0 \leq  r$ for a fixed matrix $Q$ and vector $r$. 

Control synthesis can then be formulated as the problem of finding a control trajectory $\boldsymbol{u}$ that makes $C$ consistent and optimizes a predefined cost. For a finite horizon $H$, this translates into requiring that the guarantees of $C$ are satisfiable in the context of its assumptions, hence the conjunction of the following constraints
\begin{align*}
& (\boldsymbol{z}^H,0) \models (Q\bar{x}_0 \leq r) \to \phi, \; x_{k+1} = f(x_k,u_k,w_k), \\ & w_k \sim \mathcal{D},  x_0 = \bar{x}_0, 
\end{align*}
for $k=0,1,\ldots, H-1$, must be feasible, while optimizing a cost function $J_H(x_0, \boldsymbol{u}^H)$.
By calling $\psi := (Qx_0 \leq r) \to \phi$ and using Theorem~\ref{thm:compati_consis}, we can finally solve this problem using the under-approximation $\mathcal{C}_0^S(\psi)$ obtained as described in Sec. \ref{sec:encoding} over the horizon $H$, which provides the following stochastic optimization problem:
\begin{equation}\label{eq:MPC_opti1}
\begin{split}
\min_{\boldsymbol{u}^H} \quad J_H(x_0, \boldsymbol{u}^H), \quad
\mathrm{s.t.} \quad \mathcal{C}_0^S(\psi)
\end{split}
\end{equation}
to be executed in a receding horizon fashion. It is then possible to extend previous results on MPC from STL specifications~\cite{raman2014model} to stochastic linear systems. 
\end{example}


\section{Case Studies}\label{sec:sim_exam}

\begin{figure}[t]
	\centering
	\includegraphics[width=0.35\textwidth]{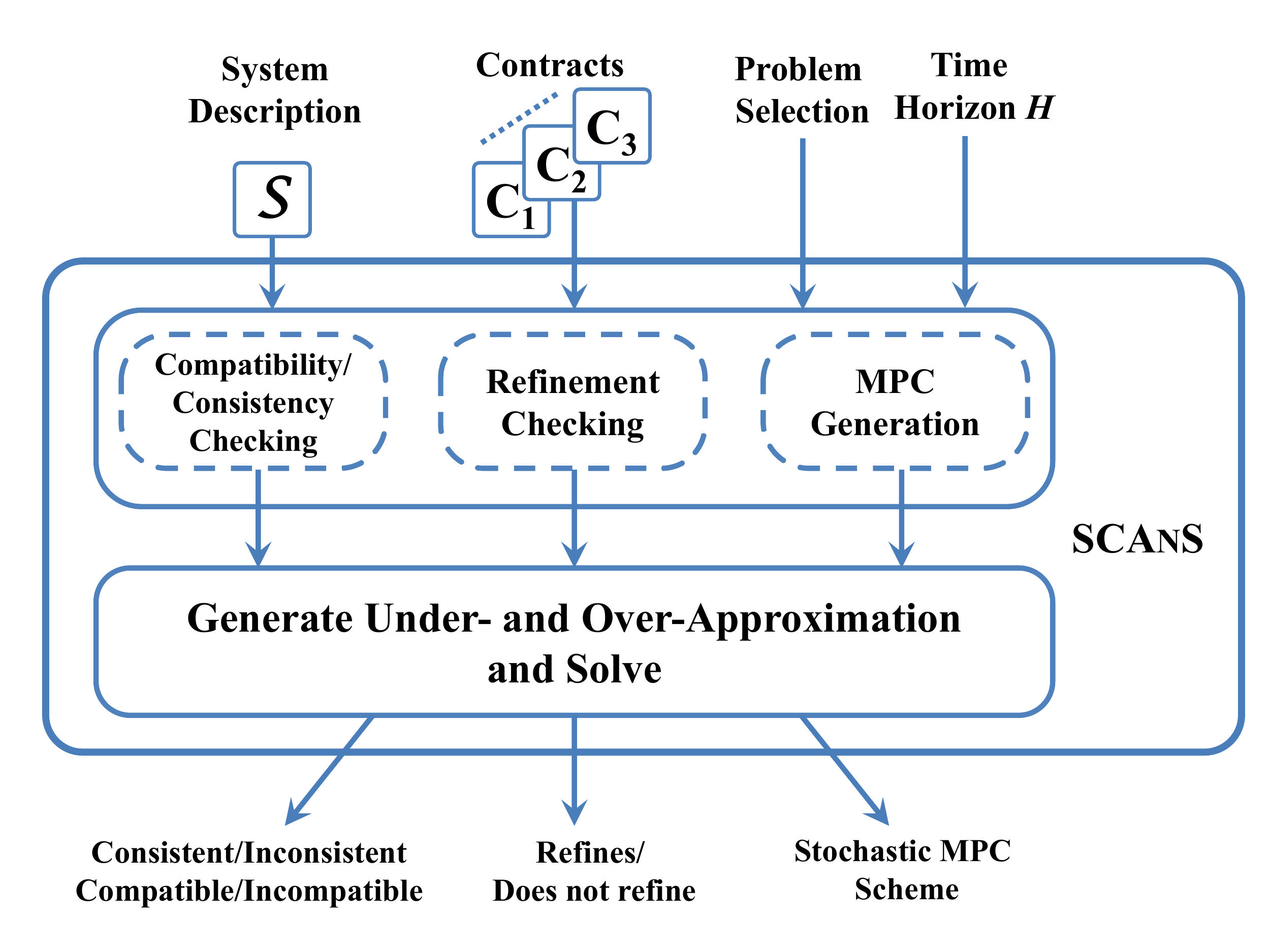}
	\caption{The \textsc{ScanS} Flow.}
	\label{fig:tool}
\end{figure}

We implemented the verification and synthesis procedures in Sec.~\ref{sec:contract_check} in the \textsc{Matlab} toolbox \textsc{SCAnS} (Stochastic Contract-based Analysis and Synthesis). 
As shown in Fig.~\ref{fig:tool}, \textsc{SCAnS} receives as inputs a system description in one of the classes of Sec.~\ref{sec:handlingCons}, a set of bounded StSTL contracts, a time horizon $H$, and a set of verification or synthesis tasks. In the verification flow, \textsc{SCAnS} computes under- and over-approximations of contract assumptions and guarantees and perform consistency, compatibility, and refinement checking of user-defined contracts using the results in Theorem~\ref{thm:compati_consis} and Theorem~\ref{thm:refine}. In the synthesis flow, \textsc{SCAnS} follows the procedure in Example~\ref{ex:mpc} to generate a stochastic optimization problem from a user-defined contract, which can be executed in a receding horizon scheme.   

We illustrate the effectiveness of our approach on two examples.
The first example utilizes both under- and over-approximations of StSTL formulas to perform contract compatibility, consistency, and refinement checking. 
The second example uses a formula under-approximation to synthesize an MPC controller for an aircraft power distribution network. \textsc{SCAnS} uses \textsc{Yalmip}~\cite{lofberg2004yalmip} to formulate mixed integer programs, \textsc{Gurobi}~\cite{gurobi} to solve mixed integer linear programs, and \textsc{bmibnb} (in \textsc{Yalmip}) to solve mixed integer nonlinear programs. All experiments ran on a $3.2$-GHz Intel Core i5 processor with $4$-GB memory.

\subsection{Contract-Based Verification}

We check compatibility and consistency for the contract and system in Example~\ref{sec:motiv_exmp}. By applying Theorem~\ref{thm:compati_consis} and the under-approximation in Sec.~\ref{sec:suffi_encode}, we find that $\mathcal{C}_0^S(\phi_{A1})$ is feasible, and so is $\mathcal{C}_0^S(\neg \phi_{A1} \vee \phi_{G1})$. Therefore, contract $(\phi_{A1},\phi_{G1})$ is both compatible and consistent.
Since the system is in the class of Sec.~\ref{sec:class1}, our encoding uses~\eqref{eq:linear_chance_cons_deter_suffi} and~\eqref{eq:linear_chance_cons_deter_neces}.
%
Given a contract $C_2$ defined as follows: 
\begin{equation*}
\begin{split}
\phi_{A2} &:= [1,0]x_0 \leq 3, \\
\phi_{G2} &:= \phi_{A2} \rightarrow \G_{[1,3]}\neg (\mathcal{P}\{[1,0]x_{2} \le 2\} \ge 0.6),
\end{split}
\end{equation*}
we can also check that $C_2 \preceq C_1$ by using the results in Theorem~\ref{thm:refine}. Moreover, to show the effectiveness of the proposed approximation, we increase the system dimension by redefining the dynamics as follows:
\begin{equation*}
\begin{split}
x_{k+1} &= A x_k + B_k u_k, \\
B_k &= I + 0.3\begin{bmatrix} w_{k,1} & & \\ & \ddots & \\ & & w_{k,1}\end{bmatrix}
-0.2\begin{bmatrix} & & w_{k,2} \\ & \iddots & \\ w_{k,2} & & \end{bmatrix}
\end{split}
\end{equation*}
where $A$ is a Jordan matrix constructed using blocks of dimension $2$ as in \eqref{eq:motivdyn}. Contract refinement checking on a system with $100$ state variables took about $20$ ms using the proposed approximate encoding, which is a $20\times$ reduction in execution time with respect to the exact encoding.

\subsection{Requirement Analysis and Control Synthesis for Aircraft Electric Power Distribution}

\begin{figure}[t]
	\centering
	\includegraphics[width=0.35\textwidth]{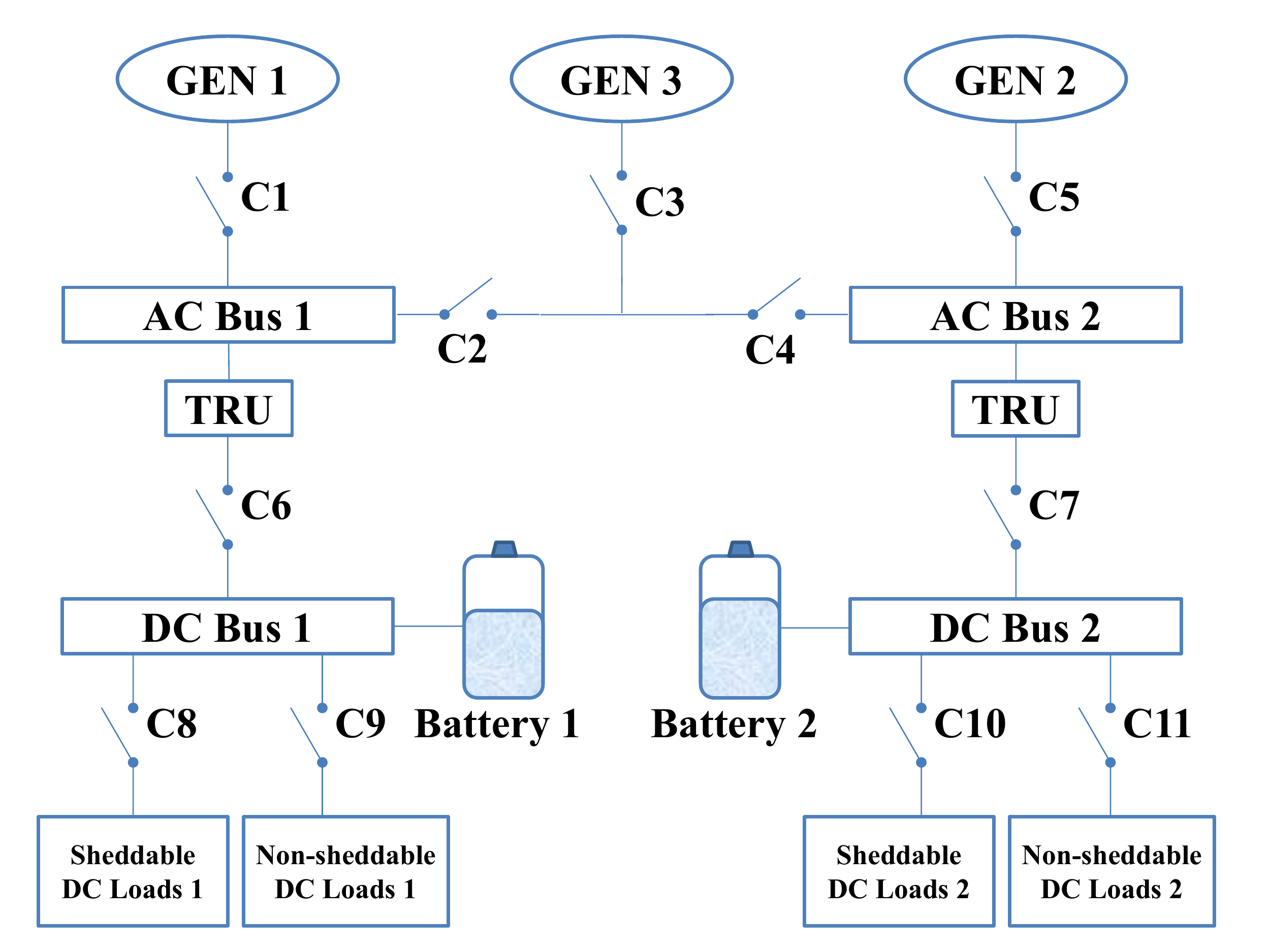}
	\caption{Simplified diagram of an aircraft power distribution system.}
	\label{fig:epsbus}
\end{figure}

An aircraft power system distributes power from
generators (engines) to loads by configuring a set of electronic control switches
denoted as contactors~\cite{nuzzo2014contract}. As shown in the simplified diagram of Fig.~\ref{fig:epsbus}, physical components of a power system include
generators, AC and DC buses, Transformer and Rectifier Units (TRUs), contactors (C1-C11), loads, and batteries. The controller, which is also denoted as Load Management System (LMS) and is not shown in the figure, determines the configuration of the contactors at each time instant, in order to provide the required power to the loads, while being subject to a set of constraints, e.g.,
on the battery charge level. 

A hierarchical LMS structure was proposed for aircraft power systems, which adopts two controller levels and is based on a deterministic model of the system~\cite{maasoumy2013optimal}. A high-level LMS (HL-LMS) operates at a lower frequency (e.g., 0.1~Hz) and provides
advice on the contactor configuration as obtained by
solving an optimization problem. The control objective is to provide power to the highest number of loads at each time (minimize load shedding) and reduce the switching frequency of contactors, hence the wear-and-tear associated with switching. A low-level LMS (LL-LMS), working at a
faster frequency (e.g., 1~Hz) takes critical decisions to place the system in safety mode by
shedding non-essential loads every time a generator fails. The LL-LMS accepts the suggestion of the HL-LMS only if it is safe.

We adopt the same model for the system architecture and the dynamics as in this reference design~\cite{maasoumy2013optimal}. The system state
is represented by the state of charge of the batteries, which are allowed to,
respectively, discharge or charge when the generator power
is insufficient or redundant with respect to the load power.
The system contains a number of generators $N_s = 3$ and a number of AC (DC) buses
$N_b = 2$, where each bus must be connected to a functional generator or TRU to receive
power. Each DC bus has $N_{sl} = 10$ sheddable loads and $N_{nsl} = 10$
non-sheddable loads, which are shown as lumped components in Fig.~\ref{fig:epsbus}. The maximum power supplied by the three generators is
$100$~kW (GEN1), $100$~kW (GEN2), and $85$~kW (GEN3). However, differently from the reference design~\cite{maasoumy2013optimal}, the power demand of each load is now a Gaussian random
variable. The average power demand assumes the values in Table II of our reference~\cite{maasoumy2013optimal}, while the variance is $0.1$ times larger than the average value. A controller based on stochastic MPC has been recently proposed for a similar power system model~\cite{shahsavari2015stochastic}. In this section, we show that \textsc{SCAnS} is able to \emph{automatically} design a controller that follows the same approach 
but can handle a \emph{richer} set of specifications.

We use StSTL to express the control specification $\psi$ for the HL-LMS, involving both
deterministic constraints on the network connectivity~\cite{maasoumy2013optimal} and stochastic constraints
on the battery levels. Sample requirements in $\psi$, over a time horizon of $20$ steps, are formalized as follows:
\begin{itemize}
	\item
	The battery charge level $B_j$ shall not be less than $0.3$ with
	probability larger than or equal to $0.95$, i.e.,
	\begin{equation}\label{batt_spec1}
	\square_{[1,20]} (0.3 - B_j)^{[0.95]}, \; j = 1,\ldots,N_b,
	\end{equation}
	\item
	If the battery level $B_j$ at time $0$ is less than or equal to $0.25$, then there exists a time in at most 5 steps at which $B_j$ equals or exceeds $0.4$ with probability larger than or equal to $0.95$, i.e., for all $j = 1,\ldots, N_b$:
	\begin{equation}\label{batt_spec2}
	(B_j - 0.25 \leq 0) \rightarrow \top \ \U_{[0,5]} (0.4 - B_j)^{[0.95]},
	\end{equation}
	\item
	If a generator is unhealthy, then it is disconnected from the buses.
	By denoting with
	$\boldsymbol{h} = (h_1,\ldots,h_{N_s})$
	the binary vector indicating the health status of the generators, where $1$ stands for ``healthy," and with
	$\boldsymbol{\delta}_j = [\delta_{1,j},\ldots,\delta_{N_s,j}]^T$ the vector whose component  $\delta_{i,j}$ is $1$ if and only if generator $i \in \{1,\ldots,N_s\}$ is connected to bus $j$, this requirement can be translated as
	\begin{equation}\label{batt_spec3}
	\square_{[0,20]} (\delta_{i,j} -{h}_i \leq 0), \qquad \forall \ i \in \{1, \ldots, N_s \}.
	\end{equation}
\end{itemize}
By calling $\psi$ the conjunction of all system requirement assertions, such as the ones above, the system-level contract is
\[
C_S = ( (\forall j \in \{1,\ldots,N_b\}\!\!: B_{j0} \in [0.2,1]) \wedge \textstyle\sum_{j=1}^{N_s} h_j \ge 2, \psi),
\]
stating that the specification $\psi$ must be satisfied if the initial battery
level is between $0.2$ and $1$ ($20\%$ and $100\%$ of the full level of charge) and if there are at least two healthy generators.

\textsc{SCAnS} was able to verify the consistency of $C_S$ using the result in Theorem~\ref{thm:compati_consis} and generate a
stochastic MPC scheme for the HL-LMS. We relied on the mixed integer linear under-approximation of $\psi$ into the constraint set
$\mathcal{C}_0^S(\psi)$ because of the large number of variables (more than $400$) in the optimization problems. When parsing $\psi$, deterministic constraints encoding the atomic propositions  $(0.3 - B_j)^{[0.95]}$
were formulated using~\eqref{eq:linear_chance_cons_deter_suffi}.
$\mathcal{C}_0^S(\psi)$ and the control objective
formed the optimization problem solved by the HL-LMS every $10$~s to provide suggestions to the LL-LMS. We observe that constraint~\eqref{batt_spec2}, capturing more complex transient behaviors, was not present in previous formulations~\cite{shahsavari2015stochastic}, while it could be easily expressed in StSTL and automatically accounted for in our MPC scheme.

In every simulation run, GEN2 is shut down at time $34$ to test the response of the LMS.
The contactor signals indicating the connection of the 3 generators to the 2 AC buses are in Fig.~\ref{fig:HL_LL_engine}. First, we observe that the LL-LMS connects GEN3 to
bus 2 at time $34$ to immediately replace the
faulty generator GEN2, before the HL-LMS can respond to this event at time 40. Meanwhile,
because the average total power consumption of
either bus 1 or bus 2 exceeds 85~kW (the maximum power supplied by GEN3), the LL-LMS sheds the loads at time $34$ in Fig.~\ref{fig:LL_bus2}. Conversely, the HL-LMS
does not detect this shutdown until time $40$. Once a new optimal configuration is
computed, as shown in Fig.~\ref{fig:HL_LL_engine}, the HL-LMS realizes that GEN2 must indeed be disconnected from bus 2 (requirement~\eqref{batt_spec3}) and proposes a configuration that connects GEN1 and GEN3 alternatively to the two buses. This prevents load shedding (all loads are now powered again) and better resource utilization, since the battery can now be effectively charged
when GEN1 is connected and then used to provide extra power when
GEN3 is connected. While the switching activity increases in this new configuration, the switching frequency is always compatible with the requirements and minimized by the MPC scheme.

\begin{figure}[t]
	\centering
	\includegraphics[width=0.3\textwidth]{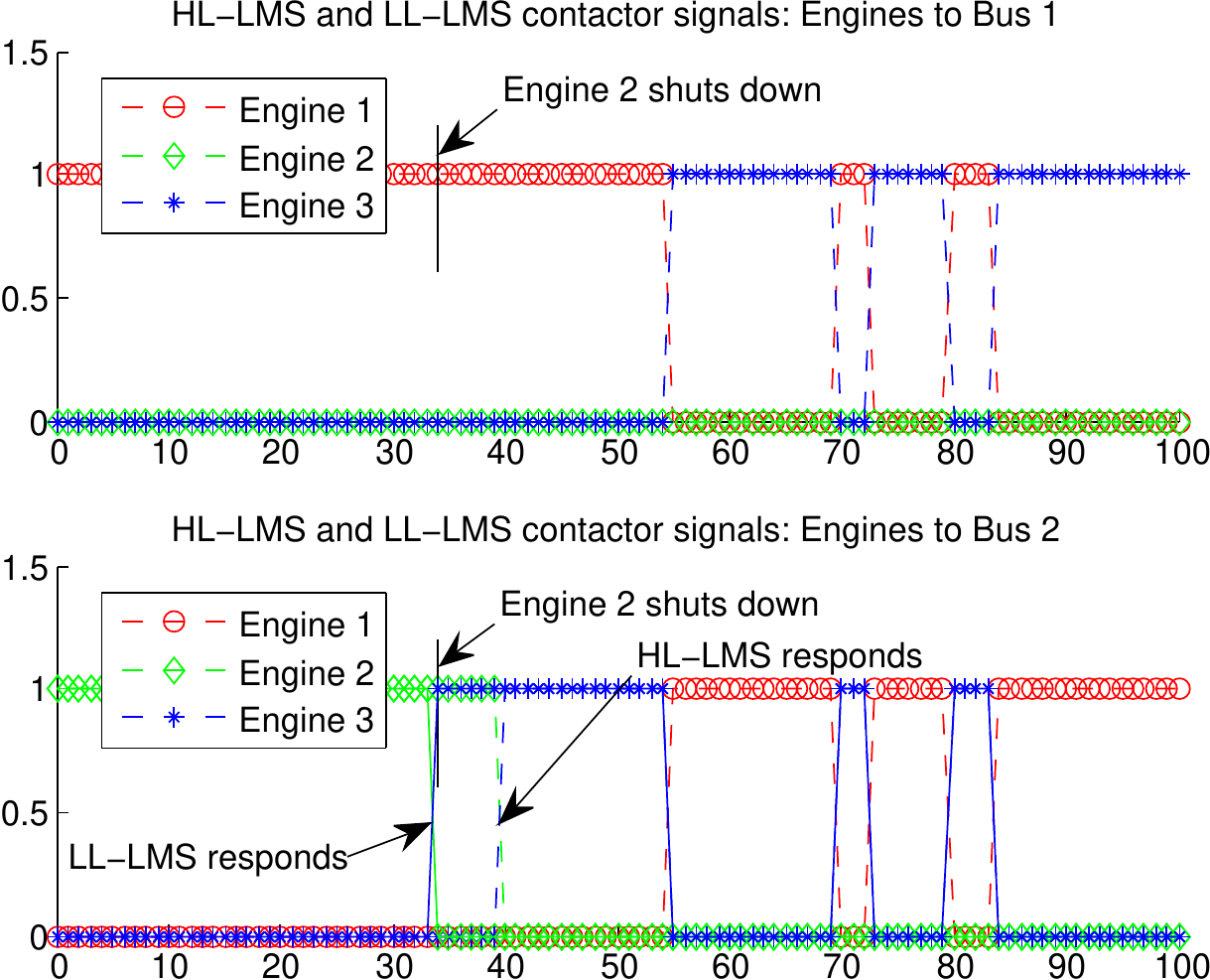}
	\caption{Contactor signals for the connection between generators (engines) and buses. The connection is present when the signal evaluates to $1$.}
	\label{fig:HL_LL_engine}
\end{figure}

\begin{figure}[t]
	\centering
	\includegraphics[width=0.3\textwidth]{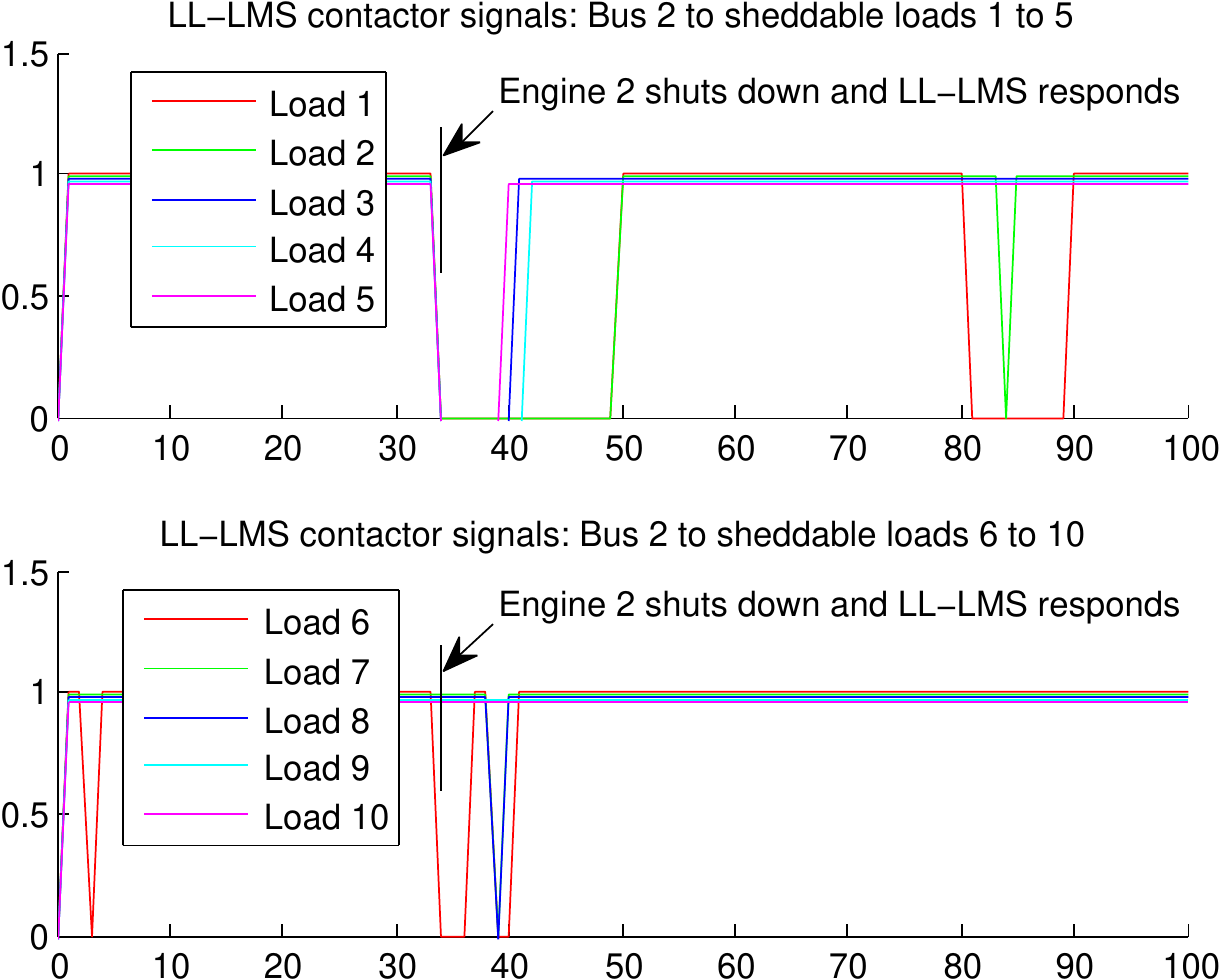}
	\caption{Contactor signals for the connection between sheddable loads and DC Bus 2. The connection is present when the signal evaluates to $1$.}
	\label{fig:LL_bus2}
\end{figure}

The trajectories of the battery charge level from 50 simulation runs
are shown in Fig.~\ref{fig:battery}. We see that the constraint~\eqref{batt_spec1} is
effective since the battery level mostly remains above $0.3$ after time $0$. Moreover,
most of the battery profiles starting from the initial condition
$B_{1,0} = B_{2,0} = 0.225$ climbs above $0.4$ before time $5$, which is consistent
with requirement \eqref{batt_spec2}. Finally, the rate of satisfaction of the constraint
$B_j \ge 0.3$, as estimated using 500 simulation runs, is larger than 0.95 at all times, which is consistent with requirement~\eqref{batt_spec1}.
One optimization run takes 0.05~s on average
and 0.24~s in the worst case.

\begin{figure}[t]
	\centering
	\includegraphics[width=0.3\textwidth]{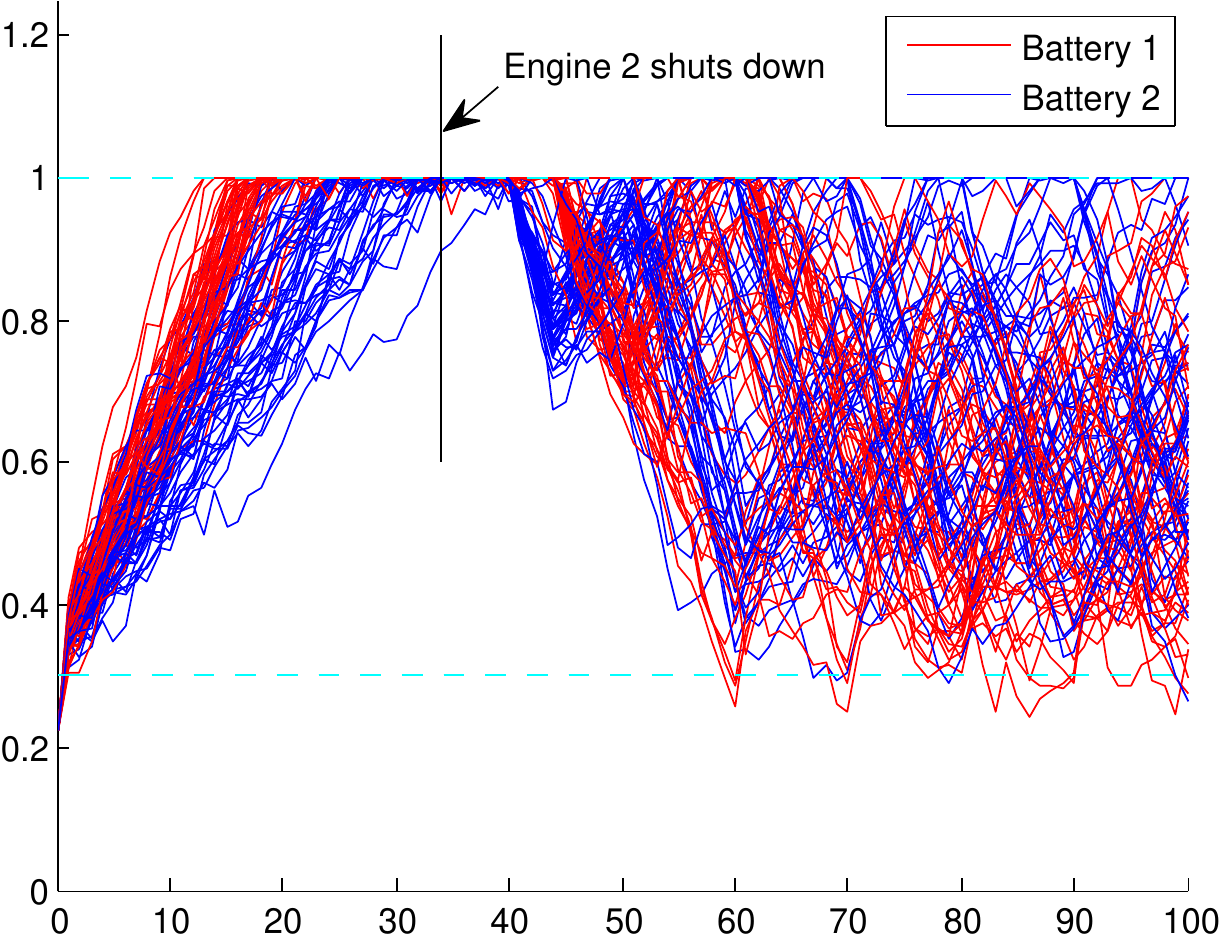}
	\caption{Battery charge level over time for 50 simulation runs.}
	\label{fig:battery}
\end{figure}

\section{Conclusions}
\label{sec:conclusions}

We developed an assume-guarantee contract framework and a supporting tool for the automated verification of certain classes of stochastic linear systems and the generation of stochastic Model Predictive Control (MPC) schemes. Our approach leverages Stochastic
Signal Temporal Logic to specify system behaviors and contracts, and algorithms that can efficiently encode and solve contract compatibility, consistency, and refinement checking problems using conservative approximations of probabilistic constraints. We illustrated the effectiveness of our approach on a few examples, including the control of aircraft electrical power distribution 
systems. Our tool can automatically design stochastic MPC schemes for a richer set of specifications than in previous work. Future work includes the investigation of mechanisms to improve the accuracy and scalability of our framework.

\bibliographystyle{IEEEtran}
{\small
	\bibliography{MyBib,nanoele-cps,IEEEproc2014,IEEEabrv}}



\end{document}